\newcommand{\comment}[1]{} %{\large\bf Comments \small\bf #1}}
\newcommand{\cltring}[2]{\mathcal{#1}\mathit{\left(#2\right)}}
\newtheorem{theorem}{Theorem}
\newtheorem{remark}[theorem]{Remark}
\newtheorem{corollary}[theorem]{Corollary}
\newtheorem{lemma}[theorem]{Lemma}
\newtheorem{notation}[theorem]{Notation}
\newtheorem{claim}[theorem]{Claim}
\newcommand{\badcltr}{(\infty,\infty)}
\newcommand{\headless}[1]{\mathcal{#1}\setminus\{h(\mathcal{#1})\}}
\newcommand{\head}[1]{h(\mathcal{#1})}
\begin{document}
%\date{}    

\title{A Polynomial Algorithm for Balanced Clustering via Graph Partitioning\thanks{This research has received funding from the projects COFLA2 (Junta de Andaluc\'ia, P12-TIC-1362) and GALGO (Spanish Ministry of Economy and Competitiveness and MTM2016-76272-R AEI/FEDER,UE).\newline
\parbox{.1\textwidth}{
\protect \includegraphics[trim=10cm 6cm 10cm 5cm,clip,scale=0.15]{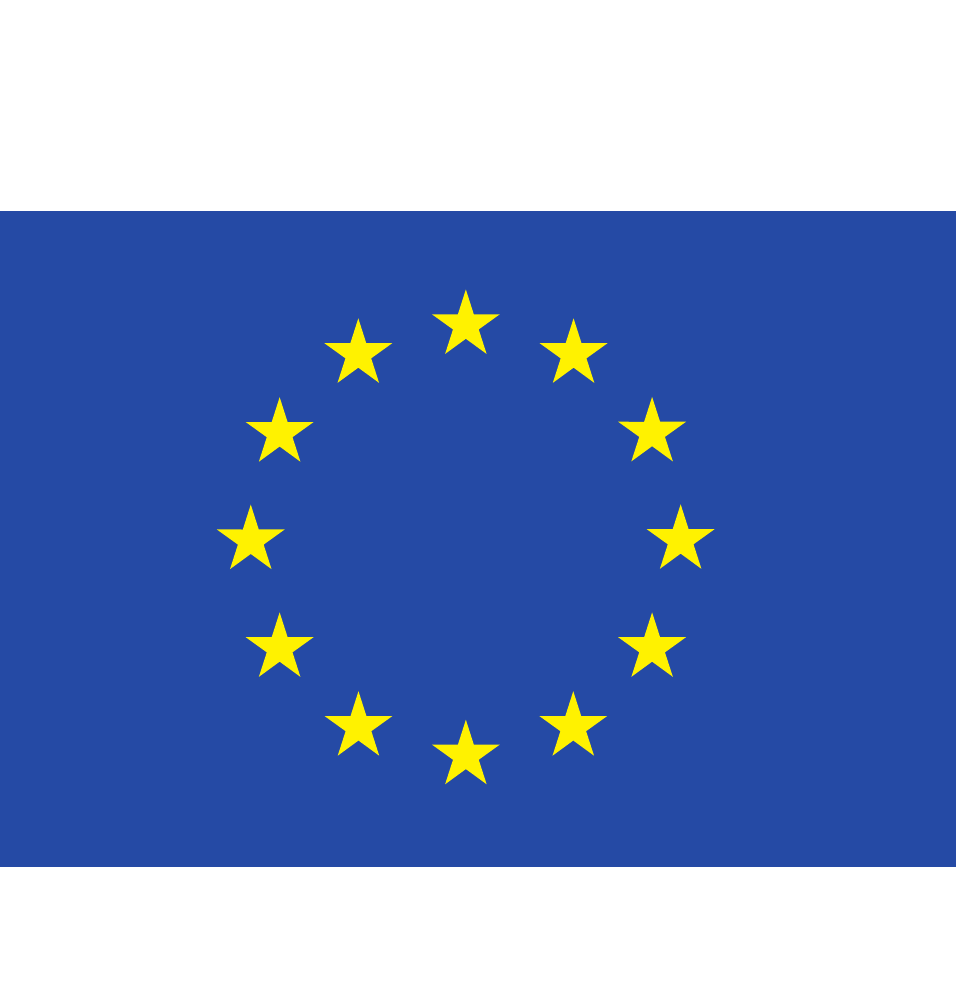}
}
\parbox{.9\textwidth}{ This work has also received funding from the European Union's Horizon 2020 research and innovation programme under the Marie Sk\l{}odowska-Curie grant agreement No 734922.}}
}

\author{Luis Evaristo Caraballo\thanks{Department of Applied Mathematics II, University of Seville, Spain. L.E.C. is funded by the Spanish Government under the FPU grant agreement FPU14/04705. Email: lcaraballo@us.es.}
\and
Jos\'e-Miguel D\'iaz-B\'a\~nez\thanks{Department of Applied Mathematics II, University of Seville, Spain. Email: dbanez@us.es.}
\and
Nadine Kroher\thanks{Department of Applied Mathematics II, University of Seville, Spain. Email: nkroher@us.es.}
}

\maketitle

\begin{abstract} 
The objective of clustering is to discover natural groups in datasets and to identify geometrical structures which might reside there, without assuming any prior knowledge on the characteristics of the data. The problem can be seen as detecting the inherent separations between groups of a given point set in a metric space governed by a similarity function. The pairwise similarities between all data objects form a weighted graph adjacency matrix which contains all necessary information for the clustering process, which can consequently be formulated as a graph partitioning problem. In this context, we propose a new cluster quality measure which uses the maximum spanning tree and allows us to compute the optimal clustering under the min-max principle in polynomial time. Our algorithm can be applied when a load-balanced clustering is required.
\end{abstract}

\section{Introduction}
The objective of clustering is to divide a given dataset into groups of similar objects in an unsupervised manner. Clustering techniques find frequent application in various areas, including computational biology, computer vision, data mining, gene expression analysis, text mining, social network analysis, VLSI design, and web indexing, to name just a few. Commonly, a metric is used to compute pair-wise similarities between all items and the clustering task is formulated as a graph partitioning problem, where a complete graph is generated from the similarity matrix. In fact, many graph-theoretical methods have been developed in the context of detecting and describing inherent cluster structures in arbitrary point sets using a distance function\cite{Asano1988}.

Here, we propose a novel clustering algorithm based on a quality measure that uses the maximum spanning tree of the underlying weighted graph and addresses a balanced grouping with the min-max principle. More specifically, we aim to detect clusters which are balanced with respect to their ratio of intra-cluster variance to their distance to other data instances. In other words, we allow clusters with a weaker inner edges to be formed, if they are located at large distance of other clusters (Figure \ref{fig:intro}). We prove that an optimal clustering under this measure can be computed in polynomial time using dynamic programming.

\begin{figure}
\centering
\includegraphics[scale=0.7]{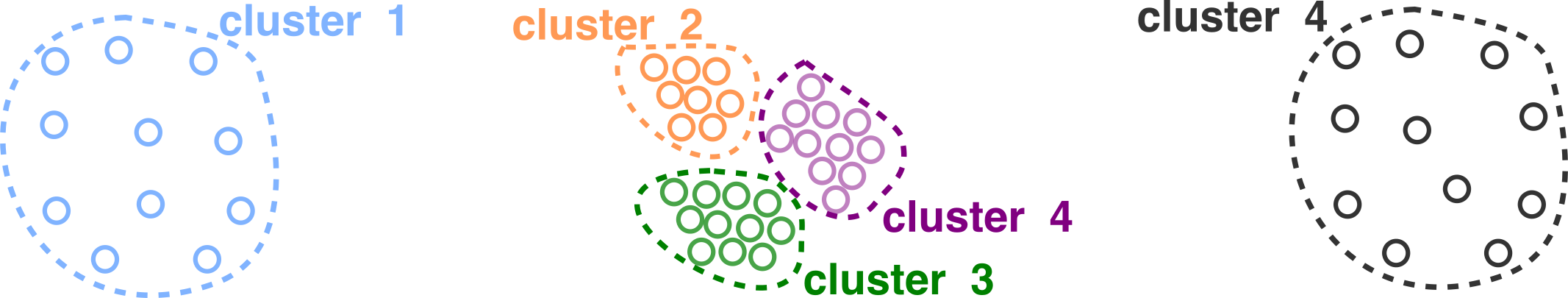}
\caption{Illustration of the desired cluster properties: The ratios of inner variance to distance to other clusters is balanced among groups. Clusters 1 and 4 exhibit a higher variance but are also further apart from the other clusters.}
\label{fig:intro}
\end{figure}

Such cluster properties are typically desired when grouping sensors in \textbf{wireless sensor networks} \cite{akyildiz2002wireless}: Each group communicates only with sensors in the same cluster and streams its information to a single command node located inside the cluster. The power consumption of sensors heavily depends on their distance to the command node and ideally, a balanced consumption among sensors is desirable. Consequently, a grouping should be balanced with respect to the ratio of inter-connection (sharing information between clusters) and intra-connection (sharing information within a cluster). A similar scenario occurs in the context of multi-robot task allocation in \textbf{cooperative robotics}, where the goal is to allocate tasks to robots while minimizing costs. For example, in monitoring missions, using a cooperative team of Unmanned Aerial Vehicles (UAVs), the goal is to minimize the elapsed time between two consecutive observations of any point in the area. Techniques based in area partitioning achieve this by assigning a sub-area to each UAV according to its capabilities. In this scenario, a load balanced clustering extends the life of the agents and allows to perform the task in a distributed manner \cite{ollero2007multiple, caraballo2017block}. Another possible application area arises from the field of \textbf{Music Information Retrieval}\cite{musicCluster}, where several applications rely on the unsupervised discovery of similar (but not identical) melodies or melodic fragments. In this context, clustering methods can be used to explore large music collections with respect to melodic similarity, or to detect repeated melodic patterns within a composition\cite{patterns}.  

 \subsection{Related work}
 Graph clustering refers to the task of partitioning a given graph $G(V,E)$ into a set of $k$ clusters $\mathcal{C} = \{C_1,\dots, C_k\}$ in such a way that vertices within a cluster are strongly connected whereas clusters are well separated. A number of exact and approximate algorithms have been proposed for this task, targeting different types of graphs (directed vs. undirected, complete vs. incomplete, etc.) and optimizing different cluster fitness values (i.e. maximizing densities, minimizing cuts). For a complete overview of existing strategies and their taxonomy, we refer to \cite{survey}. 
 The algorithm proposed in this study operates on the \textit{maximum spanning tree} $MST(G)$ of graph $G$. The idea of using minimum or maximum spanning trees (when working with distances or similarities, respectively) for cluster analysis goes back as far as 1971, when Zahn demonstrated \cite{Zahn1971} various properties which indicate that the minimum spanning tree serves as a suitable starting point for graph clustering algorithms and proposes a segmentation algorithm based on a local edge weight inconsistency criterion. This criterion was revisited and improved in \cite{Grygorash2006}. Asano et al. \cite{Asano1988} show that both, the optimal partitioning minimizing the maximum intra-cluster distances and the partitioning maximizing the minimum inter-cluster distance, can be computed from the maximum and minimum spanning trees. In the context of image processing, Xu et al. \cite{Xu1997} propose a dynamic programming algorithm for segmenting gray-level images which minimizes gray level variance in the resulting subtrees. Felzenszwalb et al. \cite{Felzenszwalb2004} introduced a comparison predicate which serves as evidence for a cluster boundary and provide a bottom-up clustering algorithm in $O(n\log{}n)$ time. In the context of Gene Expression Data Clustering, Xu et al. \cite{Xu2001} proposed three algorithms for partitioning the minimum spanning tree optimizing different quality criteria. 

\section{Problem statement}\label{sec:problemStatement}

Let $V=\{v_1,v_2,\dots,v_n\}$ be a set of points or nodes in a metric space and suppose that there exists a function to estimate the similarity between two nodes.  Let $A$ be the matrix of similarity computed for every pair of elements in $V$. The value $A[i,j]$ is the similarity between the nodes $v_i$ and $v_j$. If $A[i,j]>A[i,l]$, then the node $v_i$ is more similar to $v_j$ than $v_l$. 
Our goal is to create groups, such that similar nodes are in the same cluster and dissimilar nodes are in separate clusters.

Let $G=(V,E,w)$ be a weighted and undirected graph induced by the similarity matrix $A$ on the set of nodes $V$. In the sequel, such graphs are simply referred to as ``graph''. If $E$ is the set of edges and contains an edge for every unordered pair of nodes, and $w$ is a weight function $w:E\longrightarrow (0,1)$ such that $w(e)$ is the similarity between the nodes connected by $e$ (i.e. if $e=\{v_i,v_j\}$, then $w(e)=A[i,j]$).
%(if $w(\{v,v'\})>w(\{v,v''\})$ the melodies $v$ and $v'$ are more similar than $v$ and $v''$).% So, our problem can be analyzed as a graph clustering problem.

Let $C\subseteq V$ be a cluster. The \emph{outgoing edges set} of $C$, denoted by $Out(C)$ , is the set of edges connecting $C$ with $V\setminus C$. Let $MST(C)$ be the maximum spanning tree of $C$. Let $\max(Out(C))$ and $\min(MST(C))$ be the weights of the heaviest and lightest edges of $Out(C)$ and $MST(C)$, respectively.

We can use the following function as a quality measure of a cluster $C$:
$$\varPhi(C)=\left\lbrace \begin{array}{cl}
0 & \textnormal{if\quad} C = V,\\
&\\
\max(Out(C)) & \textnormal{if\quad} |C|=1,\\
&\\
\dfrac{\max(Out(C))}{\min(MST(C))} & \textnormal{in other case}
\end{array}\right.$$

Note that higher values of $\varPhi(\cdot)$ correspond to worse clusters. Also, note that, if $|C|=1$ then we can consider $\min(MST(C))=1$ and then  $\varPhi(C)=\max(Out(C))=\frac{\max(Out(C))}{\min(MST(C))}$. In addition, if $C=V$ then we can consider $\max(Out(C))=0$ and then  $\varPhi(C)=0=\frac{\max(Out(C))}{\min(MST(C))}$.

Let $\mathcal{C}=\{C_1\dots,C_k\}$ be a clustering formed by $k>1$ clusters of $G$.
To evaluate the quality of $\mathcal{C}$ we use the quality of the worst cluster of $\mathcal{C}$, that is, $\displaystyle\Phi(\mathcal{C})=\max_{i=1}^k\{\varPhi(C_i)\}$.

Denoting the set of all possible $k$-clusterings (clustering scenarios formed by $k$ clusters) on $G$ by $\cltring{P}{k,G}$, we state the following optimization problem (MinMax Clustering Problem):

\begin{equation}\label{opt_problem}
\begin{array}{rl}
\min &  \Phi(\mathcal{C})\\
\textnormal{subject to:} &
\mathcal{C}\in \cltring{P}{k,G}.
\end{array}\
\end{equation}

When the value of $k$ is unknown, the problem can be stated as follows:
\begin{equation}\label{opt_problem_k}
\begin{array}{rl}
\min &  \Phi(\mathcal{C})\\
\textnormal{subject to:} &
\mathcal{C}\in \displaystyle\bigcup_{k=2}^n\cltring{P}{k,G}.
\end{array}\
\end{equation}
That is, to find the clustering $\mathcal{C}\in\cltring{P}{k,G}$ such that $\Phi(\mathcal{C})$ is minimum among all possible clusterings with more than one cluster irrespective of the number of clusters contained in it.

\section{Properties of the optimal clustering}
Note that the problems stated above can be generalized to connected (not necessarily complete) graphs, by simply setting $\cltring{P}{k,G}$ as the set of all the possible partitions of $G$ in $k$ connected components. %and $\mathcal{C}^*\in\cltring{P}{k,G}$ is an optimal clustering (partition) of $G$ where every cluster is a connected component of $G$.
%Let $G$ be a graph of $n$ nodes and $k$ be a natural number such that $k\leq n$. Let $\mathcal{C^*}\in\cltring{P}{k,G}$ be an optimal clustering for the Problem~(\ref{opt_problem}), that is, $\Phi(\mathcal{C^*})\leq \Phi(\mathcal{C})$ for all clustering $\mathcal{C}\in \cltring{P}{k,G}$. 

\begin{lemma}\label{lemma1}
Let $G$ be a graph and let $\mathcal{C^*}$ be an optimal clustering of $G$ for Problem~{\ref{opt_problem}} in $\cltring{P}{k,G}$.
Then, $\Phi(\mathcal{C^*})\leq 1$.
\end{lemma}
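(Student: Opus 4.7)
The plan is to exhibit one particular $k$-clustering whose $\Phi$-value is at most $1$; since $\mathcal{C}^*$ is optimal, the bound will transfer to it.

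The candidate construction is the classical ``MST-cut'' clustering. Let $T = MST(G)$ and sort its $n-1$ edges by weight in non-increasing order: $w(e_1) \geq w(e_2) \geq \cdots \geq w(e_{n-1})$. Remove the $k-1$ lightest edges $e_{n-k+1}, \dots, e_{n-1}$, obtaining $k$ subtrees $T_1,\dots,T_k$ whose vertex sets $C_1,\dots,C_k$ form a $k$-clustering $\mathcal{C}\in\cltring{P}{k,G}$. I would then establish two inequalities for each $C_i$ and combine them.

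First, I would bound $\max(Out(C_i))$ from above. For any edge $f=\{u,v\}$ with $u\in C_i$ and $v\notin C_i$, the unique $u$--$v$ path in $T$ must cross one of the removed edges, say $e_j$ with $j\geq n-k+1$. If $w(f)>w(e_j)$, swapping $f$ for $e_j$ produces a spanning tree of $G$ of strictly larger weight, contradicting that $T$ is the maximum spanning tree. Hence $w(f)\leq w(e_{n-k+1})$, and therefore $\max(Out(C_i))\leq w(e_{n-k+1})$.

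Second, I would argue that $\min(MST(C_i))\geq w(e_{n-k+1})$. For this I need the identity $MST(G[C_i]) = T_i$, which I would prove by the same exchange trick: if $T_i'$ were a heavier spanning tree of the induced subgraph $G[C_i]$, then replacing the edges of $T_i$ inside $T$ by $T_i'$ (this is legitimate because, contracting $C_i$ to a point, the remaining edges of $T$ form a tree on the contracted graph) would yield a spanning tree of $G$ of greater total weight than $T$, a contradiction. Since every edge of $T_i$ is one of $e_1,\dots,e_{n-k}$, its weight is at least $w(e_{n-k+1})$, giving the desired bound. Combining the two inequalities yields $\varPhi(C_i)\leq 1$ in the generic case; the degenerate cases $|C_i|=1$ and $C_i=V$ are handled directly from the definition of $\varPhi$ (and the fact that weights lie in $(0,1)$). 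Thus $\Phi(\mathcal{C})\leq 1$, and by optimality $\Phi(\mathcal{C}^*)\leq \Phi(\mathcal{C})\leq 1$.

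The main obstacle I anticipate is the clean verification that $T_i$ is actually the maximum spanning tree of $G[C_i]$; without this, one only gets a lower bound on $\min(MST(C_i))$ via the subtree $T_i$, but not an equality. The exchange argument above is the cleanest route, though one should be careful to check that the ``re-assembled'' edge set is truly a spanning tree of $G$ (correct cardinality and connectivity after uncontraction). The rest of the argument is essentially bookkeeping around the corner cases in the definition of $\varPhi$.
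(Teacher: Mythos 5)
Your proposal is correct and follows essentially the same route as the paper: both exhibit the clustering obtained by cutting the $k-1$ lightest edges of a maximum spanning tree and bound its $\Phi$-value by $1$. The only difference is that you spell out, via explicit edge-exchange arguments, the two facts ($\max(Out(C_i))\leq w(e_{n-k+1})\leq\min(MST(C_i))$, with $T_i$ being a maximum spanning tree of $G[C_i]$) that the paper simply attributes to "properties of a maximum spanning tree."
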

\begin{proof}
If $k=1$, $\Phi(\mathcal{C^*})\leq 1$. If $k>1$, then take the maximum-spanning-tree of $G$ and denote it by $T=MST(G)$. Let $\mathcal{C}=\{C_1,\dots,C_k\}$ be the clustering induced by the $k$ connected components obtained by removing (or ``\emph{cutting}'') the $k-1$ lightest edges from $T$. If $|C_i|=1$ then $\varPhi(C_i)< 1$ since $w(e)\in(0,1)$ for all edge $e$ in the graph. If $|C_i|>1$, then $\min(MST(C_i))\geq \max(Out(C_i))$ (according to properties of a maximum-spanning-tree), therefore $\Phi(\mathcal{C})\leq 1$. The result follows.
\end{proof}

Before showing the next lemma, we recall the definition of the \emph{crossing edge}.
Let $G=(V,E)$ be a graph and let $\mathcal{B}=\{A,A'\}$ be a bipartition of $V$. An edge $\{v,v'\}\in E$ is a crossing edge of $\mathcal{B}$ if $v\in A$ and $v'\in A'$ (see Figure~\ref{fig:cross_edge}).

\begin{figure}
\centering
\includegraphics[scale=0.8]{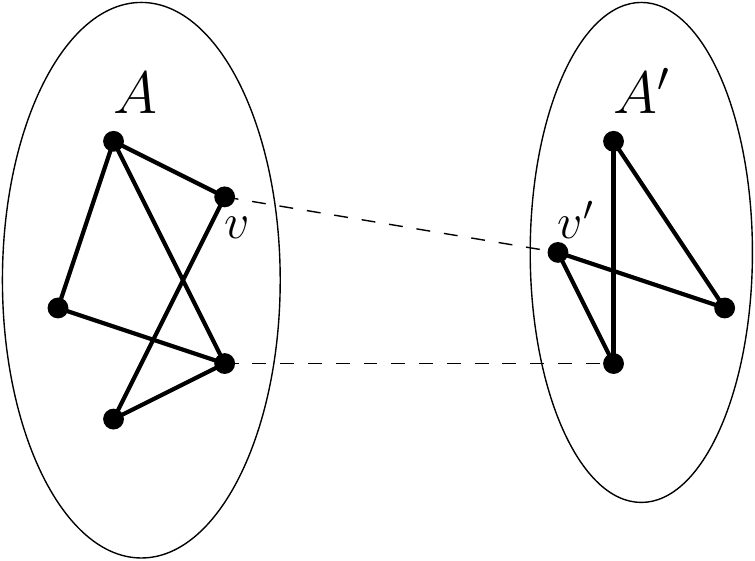}
\caption{Representation of a graph's bipartition. The crossing edges of this bipartition are indicated by dashed lines.}
\label{fig:cross_edge}
\end{figure}

\begin{lemma}\label{lemma2}
Let $G$ be a graph and let $\mathcal{C^*}$ be an optimal clustering of $G$ for Problem~{\ref{opt_problem}}. Let $C$ be a cluster in $\mathcal{C}^*$. If $|C|>1$, then every bipartition of $C$ has a crossing edge in a maximum spanning tree of $G$.
\end{lemma}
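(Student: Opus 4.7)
The plan is to prove the statement directly, by producing for each bipartition $\{A,A'\}$ of $C$ an explicit edge between $A$ and $A'$ that must lie in some maximum spanning tree of $G$. The main tools will be the cut property of maximum spanning trees, invoked at two different scales, together with Lemma~\ref{lemma1}; no contradiction or reshuffling of $\mathcal{C}^*$ will be needed.

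The first step is to choose $e=\{a,a'\}$, with $a\in A$ and $a'\in A'$, as a heaviest edge between $A$ and $A'$ in $G$. Such an edge exists because $C$ is connected, so the cut $(A,A')$ inside $G[C]$ has at least one crossing edge. Applying the cut property of maximum spanning trees to $G[C]$ with the cut $(A,A')$, the edge $e$ lies in some maximum spanning tree of $C$, and hence $w(e)\geq \min(MST(C))$. I will then bring in optimality via Lemma~\ref{lemma1}: since $\varPhi(C)\leq \Phi(\mathcal{C}^*)\leq 1$ and $|C|>1$, we get $\max(Out(C))\leq \min(MST(C))\leq w(e)$, so every edge of $Out(C)$ has weight at most $w(e)$.

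The next step will be to apply the cut property a second time, now to $G$ itself with the cut $(A,V\setminus A)$. Every edge crossing this cut is either between $A$ and $A'$ (weight at most $w(e)$ by the choice of $e$) or between $A$ and $V\setminus C$, which lies in $Out(C)$ and therefore also has weight at most $w(e)$. So $e$ is itself a heaviest edge crossing $(A,V\setminus A)$, and the cut property produces a maximum spanning tree of $G$ containing $e$. Since $e$ joins $A$ and $A'$, it is the crossing edge the lemma asks for.

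The main subtlety I expect to have to address is the treatment of ties in edge weights, since the paper only assumes $w:E\to(0,1)$ and not that weights are distinct. With ties, the cut property must be invoked in its ``there exists an MST containing $e$'' form rather than ``$e$ is in every MST''. To pin down that our specific $e$ lies in some maximum spanning tree of $G$, I will argue by a Kruskal-style selection: edges of weight strictly greater than $w(e)$ can neither cross between $A$ and $A'$ (by the choice of $e$) nor join $A$ to $V\setminus C$ (they would otherwise sit in $Out(C)$, whose weights are bounded by $w(e)$). Therefore, after processing all edges of weight strictly greater than $w(e)$, the endpoints $a$ and $a'$ still lie in distinct components, so Kruskal's algorithm can select $e$ next and extend it to a maximum spanning tree of $G$ containing $e$.
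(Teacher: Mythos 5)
Your argument is correct, and it reaches the lemma by a genuinely different route than the paper. The paper proceeds by contradiction with the cycle property: it takes the heaviest $A$--$A'$ edge $e$, assumes it is absent from $MST(G)$, adds it to form a cycle, and extracts from that cycle an edge $e'\in Out(C)$; if $w(e')=w(e)$ an exchange yields a maximum spanning tree containing $e$, and if $w(e')>w(e)$ it deduces $\max(Out(C))>\min(MST(C))$, contradicting Lemma~\ref{lemma1}. You instead give a direct, constructive argument with the cut property invoked at two scales: once in $G[C]$ to get $w(e)\geq\min(MST(C))$, and once in $G$ across the cut $(A,V\setminus A)$, with Lemma~\ref{lemma1} supplying the bound $\max(Out(C))\leq\min(MST(C))\leq w(e)$ that makes $e$ a heaviest crossing edge of that larger cut. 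Both proofs ultimately rest on the same three-way weight comparison (heaviest $A$--$A'$ edge versus $\min(MST(C))$ versus $\max(Out(C))$), so the underlying mechanism is shared; what differs is the packaging. Your version buys a cleaner treatment of ties: the Kruskal scheduling argument shows systematically that no edge heavier than $w(e)$ can merge $A$ with $V\setminus A$, so $e$ can always be selected into some maximum spanning tree, whereas the paper handles equal weights by a single one-edge swap. The paper's version is marginally shorter because the one added cycle simultaneously witnesses the needed $Out(C)$ edge and sets up the exchange. One caveat applying equally to both proofs: they establish that for each bipartition \emph{some} maximum spanning tree contains a crossing edge; neither pins down a single tree working for all bipartitions at once, which is the reading Theorem~\ref{first-theorem} implicitly uses, so you have not lost anything relative to the paper on that point.
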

\begin{proof}
We prove by contradiction. Let $C$ be a cluster in $\mathcal{C}^*$ with cardinality greater than 1 and let $\{A,A'\}$ be a partition of $C$ such that there is no edge from $A$ to $A'$ in a maximum spanning tree of $G$. Let $e$ be the heaviest edge that crosses from $A$ to $A'$. Let $MST(G)$ be a spanning tree of $G$. Due to our assumption, $e\notin MST(G)$ and therefore, adding $e$ to $MST(G)$ results in a cycle. All other edges in this cycle have a weight equal or greater than $e$, given by the properties of the maximum spanning tree. Let $e'$ be an edge in this cycle connecting a node in $C$ with others in $V\setminus C$. If $w(e)=w(e')$, then replacing $e'$ by $e$ in $MST(G)$ we obtain another maximum spanning tree containing $e$ and thus, this is a contradiction. 
If $w(e)<w(e')$, then $\max(Out(C))\geq w(e')>w(e)\geq\min(MST(C))$.% because $e'\in Out(C)$. 
This is a contradiction by Lemma~\ref{lemma1}.
\end{proof}

\begin{theorem}\label{first-theorem}
Let $G$ be a graph and let $\mathcal{C^*}\in\cltring{P}{k,G}$ be an optimal clustering of $G$ for Problem~{\ref{opt_problem}}. For every cluster $C\in\mathcal{C^*}$, the maximum spanning tree of $C$ is a subtree of a maximum spanning tree of $G$ and the heaviest outgoing edge of $C$ is in a maximum spanning tree of $G$.
\end{theorem}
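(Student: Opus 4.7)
The plan is to start with an arbitrary maximum spanning tree $T$ of $G$ and show that, possibly after a single edge swap, $T$ simultaneously contains a maximum spanning tree of $C$ as a subtree and the heaviest outgoing edge of $C$. The two assertions are thus established inside one and the same MST of $G$.

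For the first assertion, I would first observe that the subgraph $T[C]$ induced by the vertices of $C$ in $T$ must be connected. Otherwise, partitioning $C$ so that one side is a union of some connected components of $T[C]$ and the other side is the rest would produce a bipartition of $C$ with no crossing edge in $T$, directly contradicting Lemma~\ref{lemma2}. Being a connected subgraph of the tree $T$ on $|C|$ vertices, $T[C]$ is then a spanning tree of $C$. To promote this to a \emph{maximum} spanning tree of $C$, I would apply the cycle property: for any chord $e=\{u,v\}$ of $T[C]$ with $u,v\in C$, the unique $u$--$v$ path in $T$ lies entirely inside $T[C]$ (since $T[C]$ is connected and the path in $T$ is unique), so the fundamental cycle created by $e$ lies in $C$, and each of its other edges has weight at least $w(e)$ because $T$ is a maximum spanning tree of $G$. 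Hence no intra-$C$ exchange improves $T[C]$, so $T[C]$ is a maximum spanning tree of $C$.

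For the second assertion, let $e^*$ be the heaviest edge of $Out(C)$. If $e^*\in T$, nothing remains to do. Otherwise, adding $e^*$ to $T$ creates a unique cycle; since $e^*$ has exactly one endpoint in $C$, this cycle must cross the cut $(C,V\setminus C)$ at least one further time, producing another tree edge $f\in Out(C)$. The cycle property of maximum spanning trees gives $w(f)\geq w(e^*)$, while the maximality of $e^*$ in $Out(C)$ gives $w(f)\leq w(e^*)$; thus $w(f)=w(e^*)$, and swapping $f$ for $e^*$ produces a new maximum spanning tree $T'$ of $G$ containing $e^*$. Because both $f$ and $e^*$ lie in $Out(C)$, we have $T'[C]=T[C]$, so $T'$ still contains the maximum spanning tree of $C$ constructed in the previous step.

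The main obstacle, and the only place where the optimality of $\mathcal{C}^*$ really enters, is the claim that the fundamental cycle of an intra-$C$ chord stays inside $C$; this rests on the connectedness of $T[C]$, which in turn is exactly what Lemma~\ref{lemma2} provides. Once that connectedness is available, both statements reduce to the standard cut/cycle exchange arguments for maximum spanning trees.
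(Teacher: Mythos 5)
Your proof is correct and follows essentially the same route as the paper's: Lemma~\ref{lemma2} forces the induced subgraph $T[C]$ to be connected, the cycle property of maximum spanning trees then upgrades $T[C]$ to a maximum spanning tree of $C$, and a weight-preserving edge exchange places the heaviest outgoing edge into a maximum spanning tree. You spell out the exchange arguments that the paper compresses into ``by Lemma~\ref{lemma2}'' and ``deduced from the properties of the maximum spanning tree,'' and you additionally note that both conclusions can be realized in one and the same maximum spanning tree, but the underlying idea is identical.
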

\begin{proof}
Let $C$ be a cluster of $\mathcal{C^*}$. If $|C|=1$ then, obviously, $MST(C)\subset MST(G)$. If $|C|>1$ then $MST(C)\subseteq MST(G)$ by Lemma~\ref{lemma2}. The second part of the theorem, claiming that the heaviest edge in $Out(C)$ is in $MST(G)$, is deduced from the properties of the maximum spanning tree of a graph.
\end{proof}

The following result is directly deduced from the theorem above.
\begin{corollary}
Let $G$ be a graph and let $\mathcal{C^*}\in\bigcup_{k=1}^n\cltring{P}{k,G}$ be an optimal clustering of $G$ for Problem~{\ref{opt_problem_k}}. For every cluster $C\in\mathcal{C^*}$, the maximum spanning tree of $C$ is a subtree of a maximum spanning tree of $G$ and the heaviest outgoing edge of $C$ is in a maximum spanning tree of $G$.
\end{corollary}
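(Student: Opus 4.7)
The plan is to reduce Problem~\eqref{opt_problem_k} to an instance of Problem~\eqref{opt_problem} and then invoke Theorem~\ref{first-theorem} directly. The key observation is that the feasible set in Problem~\eqref{opt_problem_k} is a disjoint union of the feasible sets of Problem~\eqref{opt_problem} across all admissible values of $k$, while the objective function $\Phi(\cdot)$ is exactly the same in both problems. Consequently, any optimal $\mathcal{C}^*$ for Problem~\eqref{opt_problem_k} lies in $\cltring{P}{k^*,G}$ for some specific value $k^* \in \{1,\dots,n\}$ equal to the number of clusters it contains.

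First I would argue that this $\mathcal{C}^*$ is necessarily also optimal for Problem~\eqref{opt_problem} with parameter $k=k^*$. Indeed, if there existed some $\mathcal{C}' \in \cltring{P}{k^*,G}$ with $\Phi(\mathcal{C}') < \Phi(\mathcal{C}^*)$, then $\mathcal{C}'$ would be a feasible candidate for Problem~\eqref{opt_problem_k} strictly better than $\mathcal{C}^*$, contradicting the optimality of $\mathcal{C}^*$ for the unconstrained-$k$ problem.

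Once this reduction is established, Theorem~\ref{first-theorem} applies verbatim to $\mathcal{C}^*$ viewed as an optimal element of $\cltring{P}{k^*,G}$, yielding that for every cluster $C \in \mathcal{C}^*$ the maximum spanning tree of $C$ is a subtree of a maximum spanning tree of $G$ and that the heaviest outgoing edge of $C$ lies in such a maximum spanning tree. These are exactly the two conclusions claimed by the corollary.

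I do not anticipate any technical obstacle; the whole argument is a routine ``optimum-of-a-union is an optimum of some part'' observation. The only borderline case worth a line of comment is $k^*=1$, in which case $\mathcal{C}^*=\{V\}$, the set $Out(V)$ is empty, $\varPhi(V)=0$ by definition, and both assertions hold trivially (the maximum spanning tree of $V$ is $MST(G)$ itself, and there is no outgoing edge to consider).
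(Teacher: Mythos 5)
Your proposal is correct and is exactly the argument the paper intends: the paper states the corollary is ``directly deduced from the theorem above,'' and your reduction (an optimum over the union $\bigcup_k\cltring{P}{k,G}$ is an optimum of $\cltring{P}{k^*,G}$ for its own cluster count $k^*$, after which Theorem~\ref{first-theorem} applies verbatim) is precisely that deduction, with the $k^*=1$ edge case handled as a bonus.
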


\begin{figure}[h]
\centering
\begin{subfigure}{.48\textwidth}
	\centering
 \includegraphics[width=0.6\textwidth]{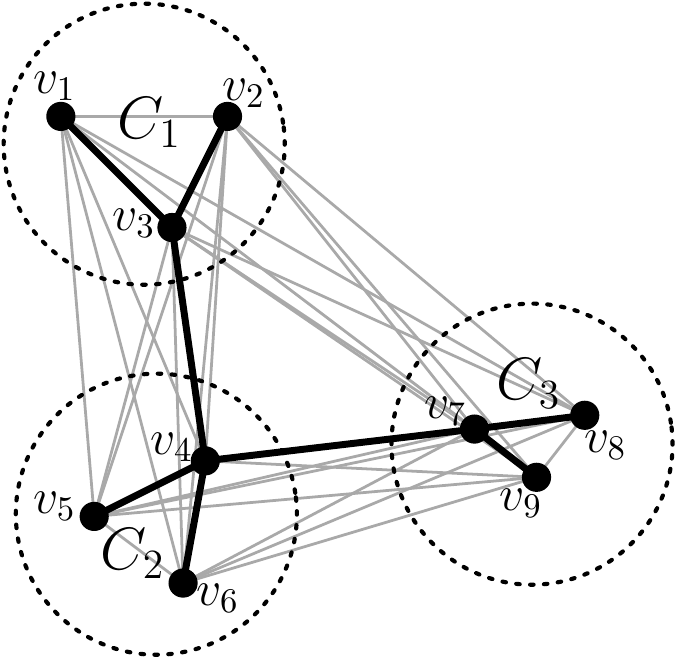}
 \caption{}
 \label{fig:graph_tree}
\end{subfigure}
\begin{subfigure}{.48\textwidth}
	\centering
	\includegraphics[width=0.6\textwidth, page = 2]{cutting-MST.pdf}
	\caption{}
	\label{fig:cutting_edges}
\end{subfigure}
\caption{(a) A graph $G$ and a spanning tree $T$ of $G$. The edges of $T$ are bold. The clustering $\mathcal{C}=\{C_1,C_2,C_3\}$ is represented by dotted strokes. (b) Obtaining an optimal clustering $\mathcal{C}^*=\{C_1,C_2,C_3\}$ by cutting two edges in the maximum spanning tree.}
\end{figure}

Let $G$ be a graph and let $T$ be a spanning tree of $G$. Note that every possible clustering of $T$ is a valid clustering in $G$ and therefore
$\cltring{P}{k,T}\subseteq\cltring{P}{k,G}$ (see Figure~\ref{fig:graph_tree}). 
However, a valid clustering of $G$ may be not feasible for $T$, for example, the clustering $\{\{v_1,v_3,v_5\},\{v_4,v_6\},\{v_2,v_7,v_8,v_9\}\}$ is valid for the graph $G$ in Figure~\ref{fig:graph_tree}, but is not feasible for $T$ because the cluster $\{v_1,v_3,v_5\}$ does not constitute a connected component in $T$. Let $C$ be a valid cluster for $G$ and $T$. Consider $Out_T(C)$ as the set of outgoing edges of $C$ as described earlier, but restricted to the set of edges forming $T$. For example, considering $G$ and $T$ in Figure~\ref{fig:graph_tree}, the set $Out_T(C_1)$ only contains the edge $\{v_3,v_4\}$; however, the set $Out(C_1)$ contains $\{v_3,v_4\},\{v_1,v_5\},\{v_1,v_6\},\dots$ Analogously, we can apply the same argument to the set of inner edges in the cluster $C$. We use the analogous notations $MST_T(\cdot)$ to denote the maximum spanning tree of a cluster using only the edges in $T$; also, note that $MST_T(C)$ is the subtree of $T$ determined by the nodes of $C$. Consequently $MST_G(C)=MST(C)$ and $Out_G(C)=Out(C)$. 

The previous explanation is needed to introduce the following notions:
Let $T$ be a spanning tree of a graph $G$. Let $\mathcal{C}\in\cltring{P}{k,T}$ be a clustering of $T$. The evaluation function $\Phi_T(\mathcal{C})$ operates as usual, but is restricted to the set of edges forming $T$. Therefore, the optimal solution for Problem~\ref{opt_problem} on $T$ is $\mathcal{C}^\dagger\in\cltring{P}{k,T}$ such that $\Phi_T(\mathcal{C}^\dagger)\leq \Phi_T(\mathcal{C})$ for every other clustering $\mathcal{C}\in\cltring{P}{k,T}$.

\begin{theorem}\label{main-theorem}
Let $G$ be a graph and let $T$ be a maximum spanning tree of $G$. If $\mathcal{C}^*\in\cltring{P}{k,G}$ and $\mathcal{C}^\dagger\in \cltring{P}{k, T}$ are the optimal clusterings (for Problem~{\ref{opt_problem}}) on $G$ and $T$, respectively; then $\Phi(\mathcal{C}^*)=\Phi_T(\mathcal{C}^\dagger)$. 
\end{theorem}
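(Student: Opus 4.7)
My plan is to establish the equality by proving both directions of inequality, with the crucial bridge being the identity $\Phi(\mathcal{C}) = \Phi_T(\mathcal{C})$ for any $\mathcal{C} \in \cltring{P}{k,T}$. The forward inclusion $\cltring{P}{k,T} \subseteq \cltring{P}{k,G}$ means $\mathcal{C}^\dagger$ is a feasible $k$-clustering of $G$, so $\Phi(\mathcal{C}^*) \leq \Phi(\mathcal{C}^\dagger)$. In the reverse direction, Theorem~\ref{first-theorem} guarantees $MST(C) \subseteq T$ for every $C \in \mathcal{C}^*$; since $MST(C)$ spans $C$, this forces $C$ to be connected in $T$, and hence $\mathcal{C}^* \in \cltring{P}{k,T}$ as well, whence optimality of $\mathcal{C}^\dagger$ in $T$ gives $\Phi_T(\mathcal{C}^\dagger) \leq \Phi_T(\mathcal{C}^*)$.

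The bridging identity reduces to two sub-claims for each cluster $C$ that is connected in $T$. First, I would show $\min(MST(C)) = \min(MST_T(C))$. Because $T[C]$ is a spanning tree of $C$, any edge $e = \{u,v\}$ with both endpoints in $C$ but $e \notin T$ closes a cycle whose tree portion lies entirely inside $C$; by the cycle property of the maximum spanning tree, $w(e)$ is dominated by every tree edge on that cycle, so no exchange can increase the minimum edge weight of $T[C]$. Hence $T[C]$ is already a maximum spanning tree of the subgraph induced by $C$. Second, for $\max(Out(C)) = \max(Out_T(C))$, I would take a heaviest outgoing edge $e$ of $C$; if $e \in T$ the identity is trivial, and otherwise adding $e$ to $T$ produces a cycle that crosses the cut $(C, V \setminus C)$, which must therefore contain some tree edge $e'$ in $Out_T(C)$, and the cycle property forces $w(e') \geq w(e)$.

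Putting these ingredients together yields $\Phi(\mathcal{C}^\dagger) = \Phi_T(\mathcal{C}^\dagger)$ and $\Phi(\mathcal{C}^*) = \Phi_T(\mathcal{C}^*)$, and the two inequalities then collapse into the desired equality $\Phi(\mathcal{C}^*) = \Phi_T(\mathcal{C}^\dagger)$. The case bookkeeping ($|C|=1$ and $C=V$) is handled uniformly since both $\Phi$ and $\Phi_T$ use the same conventions for those cases, and the identities above still apply verbatim because $Out_T(C) \subseteq Out(C)$ and both are compared to the same heaviest outgoing edge.

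I expect the outgoing-edge identity to be the subtlest step: one must invoke a parity-of-crossings observation, namely that any cycle passing through a vertex cut crosses it an even number of times, so the cycle produced by inserting $e$ must contain at least one tree edge in $Out_T(C)$ beyond $e$ itself, and only then does the maximum-spanning-tree cycle property deliver the required weight domination. The minimum-spanning-tree side is a comparatively clean application of the cycle property once one notes that $T[C]$ is connected, and the inclusion bookkeeping between $\cltring{P}{k,T}$ and $\cltring{P}{k,G}$ is immediate from Theorem~\ref{first-theorem}.
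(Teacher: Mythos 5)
Your first inequality ($\Phi(\mathcal{C}^*)\leq\Phi(\mathcal{C}^\dagger)=\Phi_T(\mathcal{C}^\dagger)$) and the bridging identity behind it are sound and coincide with the paper's argument: for a cluster $C$ that is connected in $T$, the induced subtree $T[C]$ is a maximum spanning tree of the subgraph induced by $C$, and the parity-of-crossings argument gives $\max(Out(C))=\max(Out_T(C))$. The genuine gap is in the reverse direction, in the step ``Theorem~\ref{first-theorem} guarantees $MST(C)\subseteq T$, hence $\mathcal{C}^*\in\cltring{P}{k,T}$.'' Theorem~\ref{first-theorem} (via Lemma~\ref{lemma2}) only places $MST(C)$ inside \emph{some} maximum spanning tree of $G$, not inside the particular $T$ fixed in the statement. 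When $G$ has several maximum spanning trees these are different objects, and a cluster of $\mathcal{C}^*$ can genuinely be disconnected in the given $T$. For instance, take four vertices with all six edges of weight $1/2$ and $k=2$: every $2$-clustering has evaluation $1$, so $\{\{a,b\},\{c,d\}\}$ is optimal, yet if $T$ is the star centered at $a$ the cluster $\{c,d\}$ is not connected in $T$. In that situation $\Phi_T(\mathcal{C}^*)$ is not even defined and your chain $\Phi_T(\mathcal{C}^\dagger)\leq\Phi_T(\mathcal{C}^*)=\Phi(\mathcal{C}^*)$ breaks down.

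The paper closes exactly this hole by a case split. If every cluster of $\mathcal{C}^*$ is connected in $T$, it argues as you do (only needing $\max(Out_T(C))\leq\max(Out(C))$, so $\Phi_T(\mathcal{C}^*)\leq\Phi(\mathcal{C}^*)$). If some cluster $C$ is disconnected in $T$, it picks an edge $e\in MST(C)$ joining two of the components of $C$ in $T$, notes that the cycle created by adding $e$ to $T$ must contain an edge of $Out(C)$ and that all its edges weigh at least $w(e)$, and concludes $\max(Out(C))\geq w(e)\geq\min(MST(C))$, i.e.\ $\varPhi(\mathcal{C}^*)\geq 1$; combined with $\Phi_T(\mathcal{C}^\dagger)\leq 1$ from Lemma~\ref{lemma1} this yields the inequality without ever evaluating $\mathcal{C}^*$ on $T$. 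You need to add this second case (or prove the strictly stronger, ``for every MST'' version of Lemma~\ref{lemma2}, which the paper's proof of that lemma does not establish) for your argument to go through.
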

\begin{proof}
Let us first prove that $\Phi(\mathcal{C}^*)\geq\Phi_T(\mathcal{C}^\dagger)$.
Suppose that every cluster $C\in \mathcal{C}^*$ contains a single connected component in $T$. Then, for every cluster $C\in \mathcal{C}^*$, $\min(MST_G(C))=\min(MST_T(C))$ follows from properties of a maximum spanning tree. Moreover $\max(Out_G(C))\geq\max(Out_T(C))$ (because $T$ is a subgraph of $G$). Therefore, $\varPhi(C)\geq \varPhi_T(C)$, which implies $\Phi(\mathcal{C}^*)\geq \Phi_T(\mathcal{C}^*)\geq \Phi_T(\mathcal{C}^\dagger)$.

Suppose now, that some cluster $C\in \mathcal{C}^*$ does not contain a single connected component in $T$. For an illustration, see Figure~\ref{fig:proof1} where the cluster $C$ encompasses parts of two connected components in $T$. Take an edge $e\in MST(C)$ connecting nodes in two of the different connected components determined by $C$ in $T$. Adding $e$ to $T$ will form a cycle in the edges of $T$. If there exists an edge $e'$ such that $w(e)>w(e')$ in this cycle, then $T$ is not a maximum spanning tree. Consequently, for all edges $e'\neq e$ in this cycle $w(e')\geq w(e)$. Note that this cycle includes edges in $Out(C)$ (if all edges of this cycle were inner edges of $C$, then $e$ would connect nodes within the same connected component, posing a contradiction), then $\max(Out(C))\geq w(e)\geq \min(MST(C))$. Thus $1\leq \varPhi(C)\leq \varPhi(\mathcal{C}^*)$. On the other hand, $\Phi_T(\mathcal{C}^\dagger)\leq 1$ by using Lemma~\ref{lemma1}, so, $\Phi_T(\mathcal{C}^\dagger)\leq\varPhi(\mathcal{C}^*)$.

Finally, let us prove that $\Phi(\mathcal{C}^*)\leq\Phi_T(\mathcal{C}^\dagger)$.
For every cluster $C\in\mathcal{C}^\dagger$, $\min(MST_G(C))=\min(MST_T(C))$ and $\max(Out_G(C))=\max(Out_T(G))$ can be demonstrated using the properties of the maximum spanning tree. Therefore $\varPhi(C)=\varPhi_T(C)$, which implies that $\Phi(\mathcal{C}^\dagger)=\Phi_T(\mathcal{C}^\dagger)$. Then $\Phi(\mathcal{C}^*)\leq \Phi(\mathcal{C}^\dagger)$m because $\mathcal{C}^*$ is the optimal clustering on $G$. This completes the proof.
\end{proof}

\begin{figure}
\centering
\begin{subfigure}{.49\textwidth}
\centering
\includegraphics[page=1]{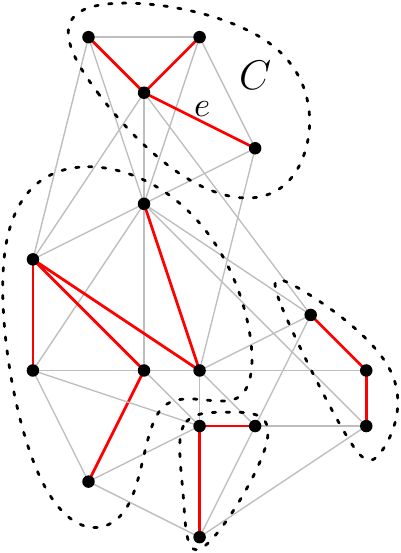}
\caption{}
\end{subfigure}
\begin{subfigure}{.49\textwidth}
\centering
\includegraphics[page=2]{proof1.pdf}
\caption{}
\end{subfigure}
\caption{Representation of a graph $G$ and a clustering $\mathcal{C}^*$. Dotted black strokes mark clustering boundaries. (a) The maximum spanning trees of the clusters are drawn in red. (b) A maximum spanning tree $T$ of $G$ is drawn in blue. The two connected components partially covered by cluster $C$ in $T$ are shaded gray.}
\label{fig:proof1}
\end{figure}

The following result is directly deduced from the theorem above.
\begin{corollary}\label{main-corollary}
Let $G$ be a graph and let $T$ be a maximum spanning tree of $G$. If $\mathcal{C}^*\in\bigcup_{k=1}^n\cltring{P}{k,G}$ and $\mathcal{C}^\dagger\in \bigcup_{k=1}^n\cltring{P}{k, T}$ are the optimal clusterings (for Problem~{\ref{opt_problem_k}}) on $G$ and $T$, respectively; then $\Phi(\mathcal{C}^*)=\Phi_T(\mathcal{C}^\dagger)$. 
\end{corollary}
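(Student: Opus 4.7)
The plan is to reduce the corollary to Theorem~\ref{main-theorem} by a straightforward ``minimum-of-equal-functions'' argument, applied pointwise in the cluster count $k$.

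First, I would fix $k \in \{1,2,\dots,n\}$ and denote by $\mathcal{C}^*_k \in \cltring{P}{k,G}$ and $\mathcal{C}^\dagger_k \in \cltring{P}{k,T}$ the optimal $k$-clusterings for Problem~\ref{opt_problem} on $G$ and on $T$ respectively. For every $k \geq 2$, Theorem~\ref{main-theorem} gives directly that $\Phi(\mathcal{C}^*_k) = \Phi_T(\mathcal{C}^\dagger_k)$. For the degenerate case $k=1$, the only clustering is $\{V\}$, and by the definition of $\varPhi$ we have $\Phi(\{V\}) = 0 = \Phi_T(\{V\})$, so the equality trivially extends to $k=1$.

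Next, since $\mathcal{C}^*$ (respectively $\mathcal{C}^\dagger$) is an optimum of Problem~\ref{opt_problem_k}, it is by definition a minimizer across all values of $k$, and so
\[
\Phi(\mathcal{C}^*) \;=\; \min_{1 \le k \le n} \Phi(\mathcal{C}^*_k) \quad\text{and}\quad \Phi_T(\mathcal{C}^\dagger) \;=\; \min_{1 \le k \le n} \Phi_T(\mathcal{C}^\dagger_k).
\]
Combining this with the per-$k$ equality from the previous paragraph, the two minima coincide, yielding the desired $\Phi(\mathcal{C}^*) = \Phi_T(\mathcal{C}^\dagger)$.

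There is essentially no hard step here: the corollary is a direct quantifier manipulation on top of Theorem~\ref{main-theorem}. The only place requiring a moment of care is making sure that the optimization over $k$ in Problem~\ref{opt_problem_k} is interpreted consistently on $G$ and on $T$; this is immediate because $\cltring{P}{k,T} \subseteq \cltring{P}{k,G}$ for every $k$, so both indices range over the same set of admissible sizes, and Theorem~\ref{main-theorem} has already reconciled the two objectives for each such $k$.
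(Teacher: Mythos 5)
Your proof is correct and matches the paper's intent exactly: the paper offers no explicit argument, stating only that the corollary is ``directly deduced from the theorem above,'' and your pointwise-in-$k$ application of Theorem~\ref{main-theorem} followed by taking minima over $k$ is precisely that direct deduction. Your handling of the degenerate $k=1$ case is a reasonable way to reconcile the index range $\bigcup_{k=1}^n$ in the corollary statement with the range $k\geq 2$ in Problem~\ref{opt_problem_k}.
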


As a consequence of the above properties, the optimal $k$-clustering $\mathcal{C^*}$ for Problem~\ref{opt_problem} can be obtained by ``cutting'' the $k-1$ appropriate edges in a maximum spanning tree of $G$ (see Figure~\ref{fig:cutting_edges}). These edges can be found combinatorially in $O(n^{k-1})$ time. Thus, using a naive approach, the solution of Problem~\ref{opt_problem_k} can be found in $O(n^{n-1})$ time. In the next section, we show an algorithm which solves both problems in polynomial time in $n$ and $k$.

\section{The algorithm}

First, recall that Theorem~\ref{main-theorem} and Corollary~\ref{main-corollary} provide a nice property, which allows us to reduce Problems ~\ref{opt_problem} and \ref{opt_problem_k} from a graph to its maximum spanning tree. Consequently, given a similarity graph, we can operate on its maximum spanning tree $T=(V,E,w)$. From now on, we will use $E$ to denote the set of edges in the maximum spanning tree. Observe that every cluster $C$ in $T$ determines only one subtree of $T$. Then, using $MST(C)$ to denote the maximum spanning tree in $C$ may be confusing or redundant. Therefore, instead of using $MST(C)$, we will use $E(C)$ (set of edges connecting nodes in $C$).

%With this notation we are ready to proof 
 
The following technical lemma is crucial for the correctness of our algorithm. 
\begin{lemma}\label{division_lemma}
	Let $\mathcal{C}$ be a clustering of a tree $T=(V,E,w)$. By removing an edge of $T$ we induce two clusterings, one for each generated subtree (see Figure~\ref{fig:div_cluster}). The evaluations of the induced clusterings are at most $\Phi(\mathcal{C})$.
\end{lemma}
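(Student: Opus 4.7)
The plan is to case-split on whether the removed edge $e$ is intra-cluster or inter-cluster with respect to $\mathcal{C}$, and then show that for every cluster $C'$ appearing in the induced clustering of $T_i$ (for $i=1,2$), there is a cluster $C \in \mathcal{C}$ such that $\varPhi_{T_i}(C') \leq \varPhi_T(C) \leq \Phi_T(\mathcal{C})$. Here $T_1$ and $T_2$ denote the two subtrees that result from removing $e$.

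First I would treat the easier case where $e$ lies between two distinct clusters $C_a, C_b \in \mathcal{C}$. Every cluster of $\mathcal{C}$ is then wholly contained in one $T_i$, so the induced clustering on $T_i$ is simply the sub-collection of clusters of $\mathcal{C}$ whose vertices lie in $V(T_i)$. For any such cluster $C$, the inner edges $E_{T_i}(C) = E_T(C)$ coincide (the removed $e$ is not inner), whereas $Out_{T_i}(C) \subseteq Out_T(C)$ because we may only have lost the edge $e$. Hence $\max(Out_{T_i}(C)) \leq \max(Out_T(C))$ and $\min(MST_{T_i}(C)) = \min(MST_T(C))$, yielding $\varPhi_{T_i}(C) \leq \varPhi_T(C)$.

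The main case is when $e$ is an intra-cluster edge within some cluster $C \in \mathcal{C}$; removing $e$ splits $C$ into $C_1 = C \cap V(T_1)$ and $C_2 = C \cap V(T_2)$. Clusters of $\mathcal{C}$ distinct from $C$ are handled exactly as in the previous paragraph. For the split pieces, say $C_1$, observe that $E_{T_1}(C_1) \subseteq E_T(C) \setminus \{e\}$, so $\min(MST_{T_1}(C_1)) \geq \min(MST_T(C))$. Furthermore, any edge of $Out_{T_1}(C_1)$ has one endpoint in $C_1 \subseteq C$ and the other in $V(T_1) \setminus C_1$, which is disjoint from $C_2$, hence disjoint from $C$; therefore $Out_{T_1}(C_1) \subseteq Out_T(C)$ and $\max(Out_{T_1}(C_1)) \leq \max(Out_T(C))$. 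Combining the two inequalities gives $\varPhi_{T_1}(C_1) \leq \varPhi_T(C)$, and symmetrically for $C_2$.

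The subtle part, and the main obstacle, is reconciling these inequalities with the piecewise definition of $\varPhi$. I would verify the boundary cases explicitly: if $C_1 = V(T_1)$ then $\varPhi_{T_1}(C_1) = 0$ and the bound is trivial; if $|C_1| = 1$ while $|C| > 1$, then the numerator bound still gives $\varPhi_{T_1}(C_1) = \max(Out_{T_1}(C_1)) \leq \max(Out_T(C)) < \max(Out_T(C))/\min(MST_T(C)) = \varPhi_T(C)$, using $\min(MST_T(C)) \in (0,1)$; and the remaining sub-cases fall under the displayed fraction inequality. Taking the maximum over all induced clusters on $T_i$ then gives $\Phi_{T_i}(\mathcal{C}|_{T_i}) \leq \Phi_T(\mathcal{C})$, which is the conclusion.
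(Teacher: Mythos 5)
Your proof is correct and rests on exactly the same comparisons as the paper's: clusters untouched by the removed edge $e$ are unchanged, and the cluster(s) incident to $e$ satisfy $\min(E(C))\leq\min$ of the piece and $\max(Out(C))\geq\max(Out)$ of the piece, split by whether $e$ is intra- or inter-cluster. The only difference is presentational — you argue directly cluster-by-cluster while the paper argues by contradiction on the worst cluster of the induced clustering — and your explicit treatment of the singleton and whole-subtree boundary cases is a welcome bit of extra care.
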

\begin{proof}
Let $\mathcal{A}$ and $\mathcal{B}$ denote the two induced clusterings and let $e$ be the removed edge (see Figure~\ref{fig:div_cluster}). For the sake of contradiction, suppose that one of the two induced clusterings has an evaluation greater than $\mathcal{C}$. W.l.o.g. assume that $\Phi(\mathcal{A})>\Phi(\mathcal{C})$. Let $A$ denote the cluster of $\mathcal{A}$ containing one of the incident nodes of $e$ (the other one is in a cluster of $\mathcal{B}$).

If $\Phi(\mathcal{A})>\varPhi(A)$, then there is another cluster $A'\in\mathcal{A}$ such that $\varPhi(A')=\Phi(\mathcal{A})$. Note that $A'$ is also in $\mathcal{C}$ and $A'$ is not affected when $e$ is removed, so, $\Phi(\mathcal{C})\geq \varPhi(A')$. This is a contradiction since we are assuming that $\varPhi(A')=\Phi(\mathcal{A})>\Phi(\mathcal{C})$.

If $\Phi(\mathcal{A})=\varPhi(A)$, then let $m_A=\min(E(A))$ (recall that we consider $m_A=1$ if $|A|=1$) and $M_A=\max(Out(A))$. Observe that the incident nodes of $e$ may be both, in the same cluster of $\mathcal{C}$ (see Figure~\ref{edge_inside}), or not (see Figure~\ref{edge_out}). Suppose the incident nodes of $e$ are in the same cluster $C\in \mathcal{C}$. Note that $\min(E(C))\leq m_A$ and $\max(E(C))\geq M_A$, therefore $$\Phi(\mathcal{C})\geq\varPhi(C)=\frac{\max(Out(C))}{\min(E(C))}\geq \frac{M_A}{m_A}=\varPhi(A)=\Phi(\mathcal{A}).$$
This is another contradiction. 

Now, suppose that the incident nodes of $e$ are in different clusters of $\mathcal{C}$. In this case $\mathcal{A}\subset\mathcal{C}$ and then $A$ is also in $\mathcal{C}$. Let $M_A^{(\mathcal{C})}$ denote the weight of the heaviest outgoing edge of $A$ in $\mathcal{C}$, then $M_A^{(\mathcal{C})}=\max\{w(e),M_A\}\geq M_A$, and the evaluation of $A$ in $\mathcal{C}$ is $\varPhi^{(\mathcal{C})}(A)=\frac{M_A^{(\mathcal{C})}}{m_A}$, therefore
$$\Phi(\mathcal{C})\geq\varPhi^{(\mathcal{C})}(A)=\frac{M_A^{(\mathcal{C})}}{m_A}\geq \frac{M_A}{m_A}=\varPhi(A)=\Phi(\mathcal{A}).$$
This is a contradiction and completes the proof.
\end{proof}

\begin{figure}
	\centering
	\begin{subfigure}{\textwidth}
		\centering
		\includegraphics[scale=1.2]{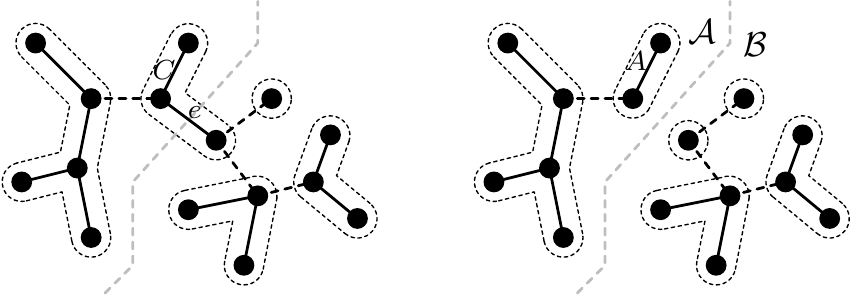}
		\caption{}
		\label{edge_inside}
	\end{subfigure}\vspace{.5cm}
	\begin{subfigure}{\textwidth}
		\centering
		\includegraphics[scale=1.2, page =2]{divide.pdf}
		\caption{}
		\label{edge_out}
	\end{subfigure}
	\caption{Obtaining two clusterings, one per subtree, by removing an edge of a given clustering.  The removed edge in (a) is inside a cluster, and in (b) is a crossing edge.}
	\label{fig:div_cluster}
\end{figure}

The proposed algorithm is based on \emph{dynamic programming}. We show, that the stated problems have an \emph{optimal substructure} and construct the optimal solution in $T$ from optimal solutions for subtrees of $T$. From here on, we consider that the tree $T$ is rooted at an arbitrary node $r\in V$. For all $v\in V$, let $c(v)$ be the set of children of $v$; and for all $v\in V, v\neq r$, let $p(v)$ be the parent of $v$. Recall that if $c(v)$ is empty then we say that $v$ is a \emph{leaf} node.

Given a tree $T$, let $S$ be a subtree of $T$ and let $v$ be the node with minimum depth in $S$. Then we say that $S$ is rooted at $v$. In the sequel, we only consider subtrees $S$ rooted at $v$ that contain all the descendants of vertices $v'\in S\setminus\{v\}$. 
% with following property: If $v'$ is in the subtree $S$ and $v'$ is different from $v$ (root of $S$), then every leaf of $T$ hanged from $v'$ is in $S$ too.
Figure~\ref{subt_invalid} shows a subtree $S'$ rooted at $v$. The leaves hanging from $v'$ and $v''$ are not in $S'$, so $S'$ is not considered as a subtree. Figure~\ref{subt_valid} shows an example of a subtree to be considered. In addition, we say that $S=T_v$ if $S$ is rooted at $v$ and contains all the descendants of $v$.

% Let $S$ be a subtree rooted at $v$, if $S$ contains all the children of $v$ then $S=T_v$, that is, $S$ contains all the nodes hanging from $v$. Note that $T_r=T$. 

\begin{figure}
\centering
\begin{subfigure}{.32\textwidth}
\centering
\includegraphics[page=3]{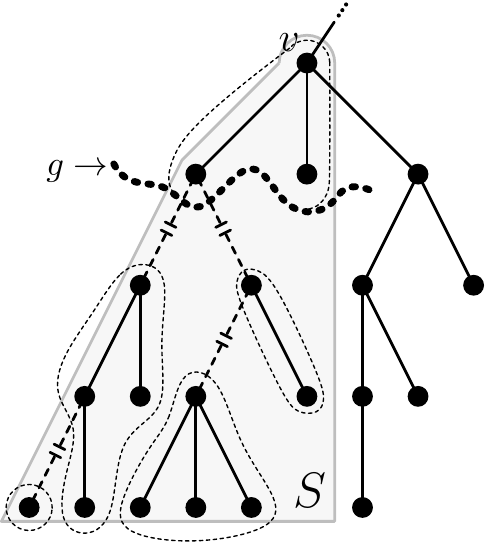}
\caption{}
\label{subt_invalid}
\end{subfigure}
\begin{subfigure}{.32\textwidth}
	\centering
	\includegraphics[page=2]{T_v_tree.pdf}
	\caption{}
	\label{subt_valid}
\end{subfigure}
\begin{subfigure}{.32\textwidth}
	\centering
	\includegraphics[page=1]{T_v_tree.pdf}
	\caption{}
	\label{fig:T_v_sample}
\end{subfigure}
\caption{(a) A subtree $S'$ which is not considered. (b) A considered subtree $S$. (c) Representation of a clustering $\mathcal{C}$ of $S$. The \emph{head cluster} is above the curve $g$. The edges of $Out_S(h(\mathcal{C}))$ are the edges in $S$ crossed by $g$. The clusters of $\mathcal{C}$ that are below $g$ constitute the headless clustering $\mathcal{C}\setminus\{h(\mathcal{C})\}$.}
\end{figure}

The main idea of our algorithm is to work on (local) clusterings of a subtree and perform a bottom-up dynamic programming strategy with two basic operations:
\begin{itemize}
\item \textbf{UpToParent}: knowing an optimal clustering of the subtree $S=T_v\neq T$, compute an optimal clustering of the subtree formed by adding $p(v)$ to $S$, see Figure~\ref{up2parent}.

\item \textbf{AddChildTree}: knowing an optimal clustering of a subtree $S$ rooted at $p(v)$ and knowing an optimal clustering of the subtree $Q=T_v$ such that $v\not\in S$, compute an optimal clustering of the subtree formed by joining $S$ and $Q$, see Figure~\ref{merging}.
\end{itemize}

Now, we elaborate on a the (local) clustering $\mathcal{C}$ of a subtree $S$ rooted at $v$, see Figure~\ref{fig:T_v_sample}. 
A clustering of $S$ is given by cutting some edges.
%Focus on a subtree $S$ rooted at a node $v$ and consider a clustering $\mathcal{C}$ of $S$. %Now we are going to introduce the notions of \emph{head cluster} and \emph{descending outgoing edges} of a cluster. 
We call a cluster containing the node $v$ \emph{head cluster} of $\mathcal{C}$, denoted $h(\mathcal{C})$ (see Figure~\ref{fig:T_v_sample}). Note that if $C\in\mathcal{C}$ and $C\neq h(\mathcal{C})$, then $Out(C)$ is entirely contained in $S$. However, $Out(h(\mathcal{C}))$ is entirely contained in $S$ only if $S=T$; if $v\neq r$ then $\{v,p(v)\}\in Out(h(\mathcal{C}))$ and $\{v,p(v)\}$ is not in $S$; if $v=r$ and $S\neq T$, then some node $v'\in c(v)$ is not in $S$ and $\{v,v'\}\in Out(h(\mathcal{C}))$ but $\{v,v'\}$ is not in $S$. Thus, it is convenient to introduce $Out_S(C)$ as the set of outgoing edges of $C$ connecting nodes in $S$. In Figure~\ref{fig:T_v_sample}, $Out_S(h(\mathcal{C}))$ is formed by the edges stabbed by the curve $g$.

%Consider the set $D=w(E_T)\cup\{0,1\}$ which is the union set of the costs of the edges in $T$ and the values 0 and 1. Also, for convenience, consider that $\min(MST(C))=1$ if $C$ is a cluster formed by a single node ($|C|=1$, the cluster has no edges). % and, if $\mathcal{C}=\{C\}$ (that is $\mathcal{C}$ is formed by a single cluster containing all the nodes of $T_v$) then $\max(Out(C))=0$.

Given a clustering $\mathcal{C}$ of a subtree $S$, let $M$ be the weight of the heaviest edge in $Out_S(h(\mathcal{C}))$, that is, $M=\max(Out_S(h(\mathcal{C})))$. If $h(\mathcal{C})$ contains all the nodes in $S$, then there are no descending outgoing edges, and in these cases we set $M=0$. On the other hand, let $\mu$ be the weight of the lightest edge in $E(h(\mathcal{C}))$, that is $\mu=\min(E(h(\mathcal{C})))$. If $h(\mathcal{C})$ is formed by single node, that is $h(\mathcal{C})=\{v\}$, then $E(h(\mathcal{C}))$ is empty, and in these cases we set $\mu=1$. For convenience, we introduce the functions $\varPhi_{S}(\cdot)$ and $\Phi_{S}(\cdot)$ as the restricted quality measures of a cluster and a clustering, respectively. They work as usual but are restricted to the edges of the subtree $S$, thus:
\begin{equation}\label{head_measure}
\varPhi_{S}(h(\mathcal{C}))=\frac{M}{\mu}.
\end{equation}
Note that if $S=T$, then $\varPhi_{S}(h(\mathcal{C}))=\varPhi(h(\mathcal{C}))$. If $S=T_v\neq T$, then:
$$\varPhi(h(\mathcal{C}))=\frac{\max\{M,w(\{v,p(v)\}) \}}{\mu}.$$
For every cluster $C\in\mathcal{C}$, such that $C$ is not the head cluster, the usual evaluation and the restricted one have the same value, $\varPhi(C)=\varPhi_{S}(C)$. Consequently, the restricted evaluation of the ``\emph{headless}'' clustering $\mathcal{C}\setminus\{h(\mathcal{C})\}$ is:

\begin{equation}\label{headless_measure}
\Phi_{S}(\mathcal{C}\setminus\{h(\mathcal{C})\})=\Phi(\mathcal{C}\setminus\{h(\mathcal{C})\})=\max\left\{\;\varPhi(C)\;\middle|\;
C\in\mathcal{C}\setminus \{h(\mathcal{C})\}\;\right\},
\end{equation}
therefore, the restricted evaluation of the clustering $\mathcal{C}$ is:
\begin{equation}\label{clustering_measure}
\Phi_{S}(\mathcal{C})=\max\left\{\varPhi_{S}(h(\mathcal{C})),\;\Phi(\mathcal{C}\setminus\{h(\mathcal{C})\})\right\}.
\end{equation}

%Equations (\ref{head_measure}), (\ref{below_head_measure}) and (\ref{clustering_measure}) are the key to use dynamic programming.

Let $S$ be a subtree of $T$, and let $\cltring{H}{l,S,\mu}$ denote the set of $l$-clusterings of $S$ in which $\mu$ is the weight of the lightest edge in the head cluster. That is:
$$
\cltring{H}{l,S,\mu} = \left\{\;\mathcal{C}\;\mid\;\mathcal{C}\in\cltring{P}{l,S}\text{\quad and\quad}\mu=\min(E(h(\mathcal{C})))\;\right\}.
$$

We are now ready to state an encoding of a local solution and the invariant that allows us to apply dynamic programming:
%properties for Problem~\ref{opt_problem}.
\begin{notation}\label{opt_matrix}
	Suppose $\cltring{H}{l,S,\mu}$ is not empty, then a clustering $\mathcal{C}$ in $\cltring{H}{l,S,\mu}$ is encoded by the ordered pair $O_S(l,\mu)=(M,b)$, if the following properties are fulfilled:
    \begin{enumerate}
 \item  $M=\max(Out_S(h(\mathcal{C})))\text{\; and\; }b=\Phi_S(\mathcal{C}).$
	\item $\Phi_S(\mathcal{C})=\min\left\{\;\Phi_S(\mathcal{C'})\;\mid\;\mathcal{C'}\in\cltring{H}{l,S,\mu}\right\}.$
	\item $\max(Out_S(h(\mathcal{C})))=\min\left\{\;\max(Out_S(h(\mathcal{C'})))\;\mid\; \mathcal{C'}\in\cltring{H}{l,S,\mu}\text{\; and\;\;} \Phi_S(\mathcal{C'})=\Phi_S(\mathcal{C})\;\right\}.$
    \end{enumerate}
% 	We encode $\mathcal{C}$ by the ordered pair $O_S(l,\mu)=(M,b)$ where: $$M=\max(Out_S(h(\mathcal{C})))\text{\; and\; }b=\Phi_S(\mathcal{C}).$$
	
	If $\cltring{H}{l,S,\mu}$ is empty, then $O_S(l,\mu)=\badcltr$, where $\infty$ indicates the ``infinity'' value.
%Let $O_S(l,\mu)$ be an ordered pair representing a clustering $\mathcal{C}\in\cltring{H}{l,S,\mu}$ in the following format:
%$$O_S(l,\mu)=(M,b),\text{\quad where}$$
%$$b=\Phi_S(\mathcal{C})=\min\left\{\;\Phi_S(\mathcal{C'})\;\mid\;\mathcal{C'}\in\cltring{H}{l,S,\mu}\;\right\},\text{\quad and}$$
%$$M=\min\left\{\;\max(Out_S(h(\mathcal{C})))\;\mid\; \mathcal{C}\in\cltring{H}{l,S,\mu}\text{\; and\;\;} \Phi_S(\mathcal{C})=b\;\right\}.$$
\end{notation}

By Lemma~\ref{lemma1}, an optimal clustering $\mathcal{C}^*$ of $T$ has an evaluation $\Phi(\mathcal{C}^*)\leq 1$ and according to Lemma~\ref{division_lemma}, if a clustering $\mathcal{C}$ of a subtree $S$ is used to build $\mathcal{C}^*$, then $\Phi_S(\mathcal{C})\leq 1$ too. Therefore, we set $O_S(l,\mu)$ as $\badcltr$ if $1<\min\left\{\;\Phi_S(\mathcal{C})\;\mid\;\mathcal{C}\in\cltring{H}{l,S,\mu}\;\right\}$. Then, given a subtree $S$, $O_S(\cdot,\cdot)$ is a function whose domain is $\mathbb{N}_{[1,k]}\times (w(E)\cup \{1\})$ and image $\{\badcltr\}\cup (w(E)\cup\{0\})\times \mathbb{R}_{[0,1]}$ where $w(E)=\{\;w(e)\;\mid\; e\in E\;\}$. 

\begin{remark}\label{O_S as table}
Some times, it is more convenient to see $O_S(\cdot,\cdot)$ as a table of $k$ rows with labels $1,2,\dots,k$ and $n$ columns with labels $w(e_1),w(e_2),\dots,w(e_{n-1}),1$ where $e_1, e_2,\dots,e_{n-1}$ is a labeling of the edges in $E$ from the lightest one to the heaviest one. In this way $O_S(l,\mu)$ refers to the cell with row-label $l$ and column-label $\mu$ and it is value is the corresponding ordered pair $(M,b)$.
\end{remark}

If $O_S(l,\mu)=(M,b)\neq\badcltr$, then, by using equations (\ref{head_measure}), (\ref{headless_measure}) and (\ref{clustering_measure}), we obtain that $O_S(l,\mu)$ encodes a clustering $\mathcal{C}$ (not necessarily unique) where: \begin{equation}\label{b_formule}
\Phi_S(\mathcal{C})=b=\max\left\{\frac{M}{\mu},\Phi(\mathcal{C}\setminus\{h(\mathcal{C})\})\right\}.
\end{equation}

For the sake of simplicity, we use the following notation for $O_S(l,\mu)=(M,b)$ (not necessarily distinct from $\badcltr$): $O_S(l,\mu)[1]=M\text{\quad and \quad}O_S(l,\mu)[2]=b.$

If we have the function $O_T$, then the evaluation of the optimal clusterings for Problems \ref{opt_problem} and \ref{opt_problem_k} are: $$\min\left\{\;O_T(k,\mu)[2]\;\mid\;\mu\in w(E)\cup\{1\}\;\right\},\text{\; and}$$
$$\min\left\{\;O_T(l,\mu)[2]\;\mid\;l\in\mathbb{N}_{[2,n]}\text{\; and \;\;}\mu\in w(E)\cup\{1\}\;\right\},\text{\; respectively}.$$

The following lemma is a useful technical result:
\begin{lemma}\label{monotony}
Let $S$ be a subtree rooted at $v$. Let $\mathcal{C}$ and $\mathcal{C'}$ be two different clusterings of $S$ such that $\min(E(h(\mathcal{C})))=\min(E(h(\mathcal{C}')))=\mu$. If $\Phi_S(\mathcal{C})<\Phi_S(\mathcal{C'})\leq 1$ then $\max(Out_S(h(\mathcal{C})))\leq \max(Out_S(h(\mathcal{C'})))$.
\end{lemma}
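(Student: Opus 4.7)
The plan is to proceed by contradiction: assume $M := \max(Out_S(h(\mathcal{C}))) > M' := \max(Out_S(h(\mathcal{C'})))$ and derive a contradiction. By equation~(\ref{b_formule}), $\Phi_S(\mathcal{C}) \geq M/\mu$; combined with the hypothesis $\Phi_S(\mathcal{C}) < \Phi_S(\mathcal{C'}) \leq 1$, this yields the strict inequality $M < \mu$.

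The heart of the argument is a ``sandwiching'' of the two head clusters inside a common auxiliary set. Consider the subgraph $S_{\geq M}$ of $S$ formed by the edges of weight at least $M$, and let $A$ denote the connected component of $v$ in $S_{\geq M}$. Every inner edge of $h(\mathcal{C})$ has weight at least $\mu > M$, so $h(\mathcal{C})$ is connected inside $S_{\geq M}$ and contains $v$; hence $h(\mathcal{C}) \subseteq A$. On the other hand, every edge in $Out_S(h(\mathcal{C'}))$ has weight at most $M' < M$, so no walk starting at $v$ and using only edges of weight $\geq M$ can exit $h(\mathcal{C'})$; hence $A \subseteq h(\mathcal{C'})$. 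Combining these gives $h(\mathcal{C}) \subseteq A \subseteq h(\mathcal{C'})$.

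To close the argument, I would pick an edge $e_M \in Out_S(h(\mathcal{C}))$ realising $w(e_M) = M$. Its endpoint inside $h(\mathcal{C})$ belongs to $A$, and since $w(e_M) \geq M$ the edge $e_M$ itself lies in $S_{\geq M}$, which forces its other endpoint into $A$ as well and thereby into $h(\mathcal{C'})$. Therefore $e_M$ is an inner edge of $h(\mathcal{C'})$, giving $M = w(e_M) \geq \min(E(h(\mathcal{C'}))) = \mu$, in direct contradiction with $M < \mu$ established above.

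The main obstacle I expect is identifying the right structural lever: a purely numerical manipulation of $\max\{M/\mu, P\}$ versus $\max\{M'/\mu, P'\}$ does not by itself close the case in which $\Phi_S(\mathcal{C'})$ is dominated by its headless part, so the comparison of $M$ and $M'$ has no immediate algebraic bridge. The sandwich $h(\mathcal{C}) \subseteq h(\mathcal{C'})$ is what makes things work: the lower inclusion uses $\Phi_S(\mathcal{C}) < 1$ to ensure that the head's interior is heavier than $M$, while the upper inclusion converts the contradiction hypothesis $M > M'$ into the statement that the boundary of $h(\mathcal{C'})$ is impermeable in $S_{\geq M}$. Once the edge realising $M$ is trapped inside $h(\mathcal{C'})$, the minimality of $\mu$ among its inner edges delivers the needed contradiction.
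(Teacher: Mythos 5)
Your proof is correct and rests on the same mechanism as the paper's own argument: from $\Phi_S(\mathcal{C})<\Phi_S(\mathcal{C'})\leq 1$ you deduce $\max(Out_S(h(\mathcal{C})))<\mu$, and then show that the edge realising this maximum is forced to be an inner edge of $h(\mathcal{C'})$, contradicting the minimality of $\mu$ among those inner edges. The only difference is presentational: the paper follows the tree path from $v$ to that edge and splits into three cases according to where the boundary of $h(\mathcal{C'})$ falls, whereas your threshold subgraph $S_{\geq M}$ and its component $A$ package those cases into the single containment chain $h(\mathcal{C})\subseteq A\subseteq h(\mathcal{C'})$.
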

\begin{proof}
	\begin{figure}
\centering
\begin{subfigure}{.32\textwidth}
\centering
\includegraphics[page = 2]{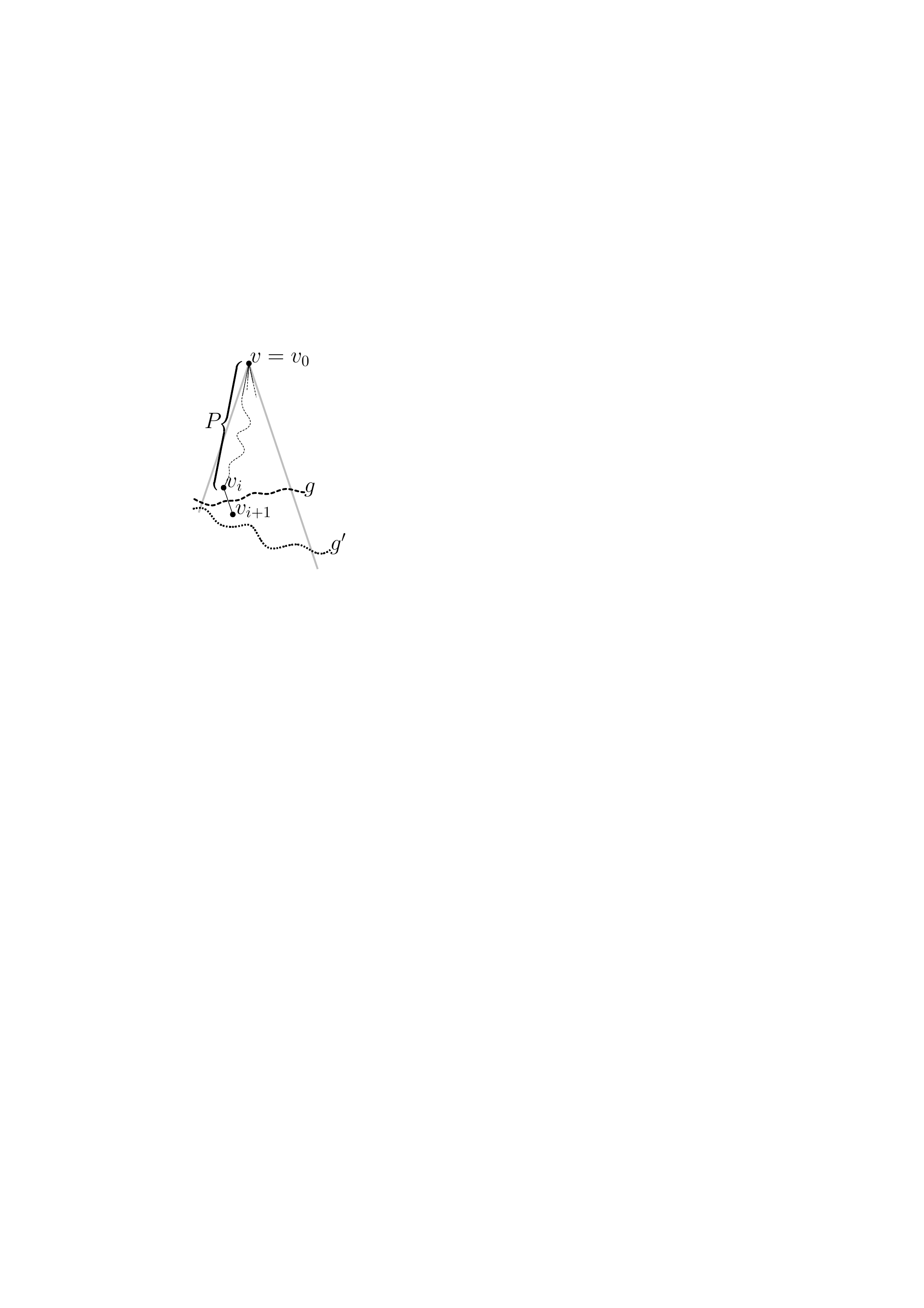}
\caption{}
\label{monotony1-1}
\end{subfigure}
\begin{subfigure}{.32\textwidth}
	\centering
	\includegraphics[page = 1]{monotony1.pdf}
	\caption{}
	\label{monotony1-2}
\end{subfigure}
\begin{subfigure}{.32\textwidth}
	\centering
	\includegraphics[page = 3]{monotony1.pdf}
	\caption{}
	\label{monotony1-3}
\end{subfigure}
\caption{Let $\mathcal{C}$ and $\mathcal{C'}$ be two different clusterings in a subtree $S$. The head clusters $\head{C}$ and $\head{C'}$ are formed by the nodes above $g$ and $g'$, respectively. Let $P=(v=v_0,v_1,\dots,v_i)$ be a path in $S$. In the three pictures $e=\{v_i,v_{i+1}\}\in Out_S(\head{C})$, note that $e$ is stabbed by $g$. (a) $e\in Out_S(\head{C'})$, note that $e$ is stabbed by $g'$. (b) $e\in \head{C'}$, note that $v_i$ and $v_{i+1}$ are both above $g'$. (c) Some edge in $P$ is in $Out_S(\head{C'})$, note that $P$ is stabbed by $g'$.}
\label{monotony1}
\end{figure}
For the sake of contradiction, suppose that $\max(Out_S(h(\mathcal{C})))> \max(Out_S(h(\mathcal{C'})))$. Observe that $\mu\geq \max(Out_S(h(\mathcal{C})))> \max(Out_S(h(\mathcal{C'})))$, since $\Phi_S(\mathcal{C})<\Phi_S(\mathcal{C'})\leq 1$. Note that $\max(Out_S(h(\mathcal{C})))> \max(Out_S(h(\mathcal{C'})))$ implies that $\max(Out_S(h(\mathcal{C})))>0$ and then $Out_S(h(\mathcal{C}))$ is not empty. Let $e$ be one of the heaviest edges of $Out_S(h(\mathcal{C}))$. Let $P=(v=v_0,v_1,\dots,v_i)$ denote a path from $v$ to $v_{i}$ such that $e=\{v_i,v_{i+1}\}$ (see Figure~\ref{monotony1}). 

If none of the edges of $P$ are in $Out_S(h(\mathcal{C}'))$, then $e\in Out_S(h(\mathcal{C'}))$ or $e\in E(h(\mathcal{C}'))$.
\begin{itemize}
\item[] If $e\in Out_S(h(\mathcal{C'}))$ (Figure~\ref{monotony1-1}) then there is a contradiction, because $$w(e)=\max(Out_S(h(\mathcal{C})))> \max(Out_S(h(\mathcal{C'}))).$$
\item[] If $e\in E(h(\mathcal{C'}))$ (Figure~\ref{monotony1-2}), using $\min(E(h(\mathcal{C}')))=\mu$ yields $\mu\leq w(e)$ and using $\mu\geq \max(Out_S(h(\mathcal{C})))$ leads to $\mu\geq w(e)$, so, $\mu=w(e)$. Therefore, there is a contradiction because: $$1=\frac{w(e)}{\mu}=\varPhi_S(h(\mathcal{C}))\leq \Phi_S(\mathcal{C})<\Phi_S(\mathcal{C}')\leq 1.$$
\end{itemize}

Suppose that an edge $e'\in P$ is in $Out_S(h(\mathcal{C}'))$ (Figure~\ref{monotony1-3}). Note that $w(e')\leq \max(Out_S(h(\mathcal{C}')))$. Also, note that $e'$ is in $E(h(\mathcal{C}))$, and consequently, $\mu\leq w(e')$. Therefore $\mu\leq w(e')\leq \max(Out_S(h(\mathcal{C}')))$ which is a contradiction since $\mu\geq \max(Out_S(h(\mathcal{C})))> \max(Out_S(h(\mathcal{C'})))$.
\end{proof}

From the previous lemma, the following result is deduced directly:
\begin{corollary}\label{minmb}
Let $S$ be a subtree of $T$. For a given value $O_S(l,\mu)=(M,b)\neq\badcltr$, every $l$-clustering $\mathcal{C}\in\cltring{H}{l,S,\mu}$ fulfills that:
$\Phi_S(\mathcal{C})\geq b$, and $\max(Out_S(h(\mathcal{C})))\geq M$.
\end{corollary}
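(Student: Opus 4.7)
The plan is to fix a witness: pick an $l$-clustering $\widehat{\mathcal{C}}\in\cltring{H}{l,S,\mu}$ that is encoded by $O_S(l,\mu)=(M,b)$, so that by Notation~\ref{opt_matrix} we have $\Phi_S(\widehat{\mathcal{C}})=b$ and $\max(Out_S(h(\widehat{\mathcal{C}})))=M$. Since $O_S(l,\mu)\neq\badcltr$, the convention spelled out just after Notation~\ref{opt_matrix} together with property~2 forces $b\leq 1$, and hence $M\leq \mu$. From this setup each of the two asserted inequalities drops out by invoking either the defining properties of $O_S$ or Lemma~\ref{monotony}.

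The first inequality $\Phi_S(\mathcal{C})\geq b$ is immediate from property~2 of Notation~\ref{opt_matrix}: $b$ is exactly the minimum of $\Phi_S(\mathcal{C}')$ over $\mathcal{C}'\in\cltring{H}{l,S,\mu}$, and $\mathcal{C}$ lies in that set.

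For the second inequality $\max(Out_S(h(\mathcal{C})))\geq M$, I split on whether $\Phi_S(\mathcal{C})$ equals $b$ or strictly exceeds it. If $\Phi_S(\mathcal{C})=b$, then property~3 of Notation~\ref{opt_matrix} says precisely that $M$ is the minimum of $\max(Out_S(h(\cdot)))$ over those clusterings in $\cltring{H}{l,S,\mu}$ whose $\Phi_S$-value equals $b$, and $\mathcal{C}$ is one of them, so the inequality follows. If instead $\Phi_S(\mathcal{C})>b$, then $\widehat{\mathcal{C}}$ and $\mathcal{C}$ share the subtree $S$, share the same $\mu$, and satisfy $\Phi_S(\widehat{\mathcal{C}})<\Phi_S(\mathcal{C})$; provided also that $\Phi_S(\mathcal{C})\leq 1$, the hypotheses of Lemma~\ref{monotony} are fulfilled and the lemma delivers $M=\max(Out_S(h(\widehat{\mathcal{C}})))\leq\max(Out_S(h(\mathcal{C})))$.

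The subtle point, and really the only non-routine one, is the regime $\Phi_S(\mathcal{C})>1$, which falls outside the hypothesis of Lemma~\ref{monotony}. This is the main obstacle to a fully general statement, but it is immaterial in the algorithmic context: by Lemma~\ref{lemma1} combined with Lemma~\ref{division_lemma}, only subtree clusterings with $\Phi_S\leq 1$ can occur as pieces of any optimal global clustering, so in practice the corollary is only ever invoked on such $\mathcal{C}$, and the argument above then covers every case.
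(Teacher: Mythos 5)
Your proof follows exactly the route the paper intends: the paper offers no argument for Corollary~\ref{minmb} beyond ``deduced directly'' from Lemma~\ref{monotony}, and the intended deduction is precisely your three steps --- property~2 of Notation~\ref{opt_matrix} gives $\Phi_S(\mathcal{C})\geq b$ unconditionally, property~3 handles the case $\Phi_S(\mathcal{C})=b$, and Lemma~\ref{monotony} applied to the encoded witness $\widehat{\mathcal{C}}$ and $\mathcal{C}$ handles $b<\Phi_S(\mathcal{C})\leq 1$. The caveat you isolate is genuine and is an imprecision of the paper rather than of your argument: the hypothesis $\Phi_S(\mathcal{C}')\leq 1$ is actually used in the proof of Lemma~\ref{monotony} (in the subcase where $e$ lands in $E(h(\mathcal{C}'))$, the contradiction $1=\varPhi_S(h(\mathcal{C}))\leq\Phi_S(\mathcal{C})<\Phi_S(\mathcal{C}')\leq 1$ needs the upper bound), so the corollary as literally stated, quantified over \emph{every} $\mathcal{C}\in\cltring{H}{l,S,\mu}$, does not drop out ``directly'' when $\Phi_S(\mathcal{C})>1$. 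Your repair is also the right one: in every place the corollary is invoked the clustering in question is already known to satisfy $\Phi\leq 1$ --- the induced clustering $\mathcal{A}$ in the proof of Lemma~\ref{opt_struct} satisfies $\Phi_Q(\mathcal{A})\leq b\leq 1$ by Lemma~\ref{division_lemma}, and the constructed $\mathcal{C}'$ there is shown cluster-by-cluster to satisfy $\varPhi_S(\cdot)\leq b\leq 1$ --- so your argument covers every actual use. (If one insisted on the unrestricted statement, a further argument would be required; tracing the path from the root to the heaviest outgoing edge $e$ of $h(\widehat{\mathcal{C}})$ shows that $\max(Out_S(h(\mathcal{C})))<M$ can only occur if $e\in E(h(\mathcal{C}))$, which forces $M=\mu$ and $b=1$, so the statement holds unconditionally whenever $b<1$; the boundary case $b=1$ would need separate treatment, but nothing in the paper depends on it.)
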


The following lemma is the key of the proposed dynamic programming:
\begin{lemma}\label{opt_struct}
Let $S$ be a subtree rooted at $v$. Let $O_S(l,\mu)=(M,b)\neq\badcltr$ and let $\mathcal{C}$ be an $l$-clustering of $S$ encoded by $O_S(l,\mu)$. Let $Q$ be a subtree of $S$ rooted at $v'\in c(v)$. By removing the edge $e=\{v',v\}$ from $S$ an $l'$-clustering $\mathcal{A}$ of $Q$ is induced. Let $\mu'=\min(E(h(\mathcal{A})))$. By replacing $\mathcal{A}$ with a clustering $\mathcal{B}$ encoded by $O_{Q}(l',\mu')$ and restoring the edge $e$, a new clustering $\mathcal{C}'$ of $S$ is obtained, which is also encoded as $O_S(l,\mu)=(M,b)$ (see Figures \ref{key_e_out} and \ref{key_e_in}).
\end{lemma}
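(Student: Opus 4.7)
The plan is to reduce the claim to showing $\mathcal{C}'\in\cltring{H}{l,S,\mu}$ together with the two inequalities $\Phi_S(\mathcal{C}')\le b$ and $\max(Out_S(h(\mathcal{C}')))\le M$. Once these hold, the minimality clauses~2 and~3 of Notation~\ref{opt_matrix} (equivalently Corollary~\ref{minmb}) applied to $(M,b)$ force the reverse inequalities, so $\mathcal{C}'$ is encoded by $O_S(l,\mu)=(M,b)$ as well. The natural case split is on the status of $e$ in $\mathcal{C}$: either $e$ is a crossing edge of $\mathcal{C}$ (so $v'\notin h(\mathcal{C})$ and $h(\mathcal{A})$ is a non-head cluster of $\mathcal{C}$) or $e\in E(h(\mathcal{C}))$ (so $h(\mathcal{A})$ is the $Q$-portion of $h(\mathcal{C})$); cluster counts in both cases balance to $l$ because either one cluster of $\mathcal{C}$ is split and then re-merged, or none is.

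In the crossing-edge case, restoring $e$ leaves the head cluster unchanged, $h(\mathcal{C}')=h(\mathcal{C})$, so the invariants $\min(E(h(\mathcal{C}')))=\mu$ and $\max(Out_S(h(\mathcal{C}')))=M$ hold automatically. The real work is bounding the $Q$-side contribution to $\Phi_S(\mathcal{C}')$. Viewed inside $\mathcal{C}$, the cluster $h(\mathcal{A})$ has outgoing edges $Out_S(h(\mathcal{A}))=Out_Q(h(\mathcal{A}))\cup\{e\}$, so $\varPhi_Q(h(\mathcal{A}))\le \varPhi_S(h(\mathcal{A}))\le b$; together with $\Phi(\mathcal{A}\setminus\{h(\mathcal{A})\})\le b$ this yields $\Phi_Q(\mathcal{A})\le b$, and then $\Phi_Q(\mathcal{B})\le \Phi_Q(\mathcal{A})\le b$ by the encoding of $\mathcal{B}$. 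The analogous identity for $h(\mathcal{B})$ in $\mathcal{C}'$, combined with $\max(Out_Q(h(\mathcal{B})))\le \max(Out_Q(h(\mathcal{A})))$, gives $\varPhi_S(h(\mathcal{B}))\le \varPhi_S(h(\mathcal{A}))\le b$, which closes this case.

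The harder case is $e\in E(h(\mathcal{C}))$. Let $S'$ be the component of $S\setminus\{e\}$ containing $v$ and $H'=h(\mathcal{C})\cap S'$, so that $h(\mathcal{C})=H'\cup h(\mathcal{A})\cup\{e\}$ and $h(\mathcal{C}')=H'\cup h(\mathcal{B})\cup\{e\}$. First I would observe $\mu'\ge\mu$ (since $h(\mathcal{A})\subseteq h(\mathcal{C})$) and $Out_Q(h(\mathcal{A}))\subseteq Out_S(h(\mathcal{C}))$, from which $\varPhi_Q(h(\mathcal{A}))\le M/\mu'\le M/\mu\le b$, hence $\Phi_Q(\mathcal{A})\le b$ and $\Phi_Q(\mathcal{B})\le b$. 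Since $\min(E(h(\mathcal{A})))=\min(E(h(\mathcal{B})))=\mu'$, the identity $\min(E(h(\mathcal{C}')))=\min\{\min(E(H')),\,w(e),\,\mu'\}$ matches the corresponding expression for $h(\mathcal{C})$, so $\min(E(h(\mathcal{C}')))=\mu$. Replacing $Out_Q(h(\mathcal{A}))$ by $Out_Q(h(\mathcal{B}))$ in the decomposition of $Out_S(h(\mathcal{C}))$ gives $\max(Out_S(h(\mathcal{C}')))\le M$, and the non-head clusters of $\mathcal{C}'$ on $S'$ (which coincide with those of $\mathcal{C}$) together with $\Phi_Q(\mathcal{B})\le b$ deliver $\Phi_S(\mathcal{C}')\le b$.

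The main obstacle I expect is precisely this bookkeeping in the interior-edge case: three invariants (the number of clusters, the head-cluster minimum $\mu$, and the head-cluster outgoing maximum $M$) must all survive the surgery of cutting $e$, swapping $\mathcal{A}$ for $\mathcal{B}$, and re-gluing $e$, and the comparison between evaluations in $\mathcal{C}$, $\mathcal{A}$, $\mathcal{B}$, and $\mathcal{C}'$ is delicate because a single cluster contributes via different outgoing-edge sets depending on whether it is evaluated in $S$ or in $Q$. Once $\mathcal{C}'\in\cltring{H}{l,S,\mu}$, $\Phi_S(\mathcal{C}')\le b$, and $\max(Out_S(h(\mathcal{C}')))\le M$ are established in both cases, Corollary~\ref{minmb} supplies the reverse inequalities and completes the proof.
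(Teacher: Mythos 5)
Your proposal is correct and follows essentially the same route as the paper's proof: the same reduction via Corollary~\ref{minmb} to the one-sided bounds $\Phi_S(\mathcal{C}')\le b$ and $\max(Out_S(h(\mathcal{C}')))\le M$, the same case split on whether $e$ is a crossing edge or lies inside $h(\mathcal{C})$, and the same min/max decomposition identities in the interior case. The only cosmetic difference is that you re-derive the bound $\Phi_Q(\mathcal{A})\le b$ directly from the decomposition of $Out_S(h(\mathcal{A}))$, where the paper simply invokes Lemma~\ref{division_lemma}.
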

\begin{proof}

By using Corollary~\ref{minmb}, we have that
\begin{equation}\label{observation1}
\Phi_Q(\mathcal{B})\leq\Phi_Q(\mathcal{A}),\;\text{ and\; }\max(Out_Q(h(\mathcal{B})))\leq \max(Out_Q(h(\mathcal{A}))).
\end{equation} 
By using Lemma~\ref{division_lemma}, we have that $\Phi_Q(\mathcal{A})\leq b$, then:
\begin{equation}\label{observation2}
b\geq \Phi_Q(\mathcal{A})\geq\varPhi_Q(h(\mathcal{A}))=\frac{\max(Out_Q(h(\mathcal{A})))}{\mu'}
\end{equation}
Moreover, using observations (\ref{observation1}) and (\ref{observation2}) yields to:
\begin{equation}\label{observation3}
\Phi_Q(\mathcal{B})\leq b.
\end{equation}

Obviously, $\mathcal{C}$ and $\mathcal{C}'$ are both $l$-clusterings. We need to prove that: $$\min(E(h(\mathcal{C})))=\min(E(h(\mathcal{C}')))=\mu,$$ $$\Phi_S(\mathcal{C})=\Phi_S(\mathcal{C}')=b,\text{\; and\; }$$ $$\max(Out_S(h(\mathcal{C})))=\max(Out_S(h(\mathcal{C}')))=M.$$ We divide the rest of the proof into two parts according to the two possible situations when $e$ is going to be removed, $e\in Out_S(h(\mathcal{C}))$ (see Figure~\ref{key_e_out}) or $e\in E(h(\mathcal{C}))$ (see Figure~\ref{key_e_in}). 

Let us start with the first case. Note that, by replacing $\mathcal{A}$ by $\mathcal{B}$, the head cluster is not affected. Consequently, $\max(Out_S(h(\mathcal{C})))=\max(Out_S(h(\mathcal{C}')))=M$, and $\min(E(h(\mathcal{C})))=\min(E(h(\mathcal{C}')))=\mu$. Let us prove that $\Phi_S(\mathcal{C})=\Phi_S(\mathcal{C}')=b$. By Corollary~\ref{minmb}, it is enough to prove that $\varPhi_S(C)\leq b$  for every cluster $C\in\mathcal{C'}$. For every cluster $C\in\mathcal{C'}$ such that $C\not\in \mathcal{B}$, we have that $C$ is also contained in $\mathcal{C}$, so, $\varPhi_S(C)\leq b$. For every cluster $C\in\mathcal{B}$, such that $C\neq h(\mathcal{B})$ we have $\varPhi_S(C)=\varPhi_Q(C)\leq \Phi_Q(\mathcal{B})\leq b$ by observation (\ref{observation3}). From observation (\ref{observation1}) we can deduce that $\varPhi_S(h(\mathcal{B}))\leq \varPhi_S(h(\mathcal{A}))\leq b$.

Let us analyze the second case. Let $\mathcal{D}$ denote the induced clustering of the (remaining) subtree rooted at $v$ (see Figure~\ref{key_e_in}). Note that: $$\min(E(h(\mathcal{C'})))=\min\left\{\min(E(h(\mathcal{D}))),w(e),\min(E(h(\mathcal{B})))\right\},\text{\; and}$$ 
$$\min(E(h(\mathcal{C})))=\min\left\{\min(E(h(\mathcal{D}))),w(e),\min(E(h(\mathcal{A})))\right\}.$$
Notice that $\min(E(h(\mathcal{A})))=\min(E(h(\mathcal{B})))=\mu'$ by construction, therefore, $\min(E(h(\mathcal{C'})))=\min(E(h(\mathcal{C})))$. By Corollary~\ref{minmb}, it is enough to prove that $\max(Out_S(h(\mathcal{C'})))\leq M$ and $\varPhi_S(C)\leq b$ for every cluster $C\in\mathcal{C'}$. Note that:
$$\max(Out_S(h(\mathcal{C'}))) = \max\left\{\max(Out_S(h(\mathcal{D}))), \max(Out_S(h(\mathcal{B})))\right\},\text{\; and}$$
$$\max(Out_S(h(\mathcal{C})))= \max\left\{\max(Out_S(h(\mathcal{D}))), \max(Out_S(h(\mathcal{A})))\right\}.$$
Since $\max(Out_S(h(\mathcal{B})))\leq \max(Out_S(h(\mathcal{A})))$, therefore, $$\max(Out_S(h(\mathcal{C'})))\leq \max(Out_S(h(\mathcal{C})))=M.$$
Finally, for every cluster $C\in\mathcal{C'}\setminus\{h(\mathcal{C'})\}$ such that $C\not\in \mathcal{B}$, $C$ is also contained in $\mathcal{C}$, so, $\varPhi_S(C)\leq b$. For every cluster $C\in(\mathcal{C'}\setminus\{h(\mathcal{C'})\})\cap\mathcal{B}$ we have $\varPhi_S(C)=\varPhi_Q(C)\leq \Phi_Q(\mathcal{B})\leq b$ by observation (\ref{observation3}). From observation (\ref{observation1}) we can deduce that $\varPhi_S(h(\mathcal{C'}))\leq \varPhi_S(h(\mathcal{C}))\leq b$.
\end{proof}

\begin{figure}[t]
	\centering
	\begin{subfigure}{.24\textwidth}
		\centering
		\includegraphics[page=5,scale=0.7]{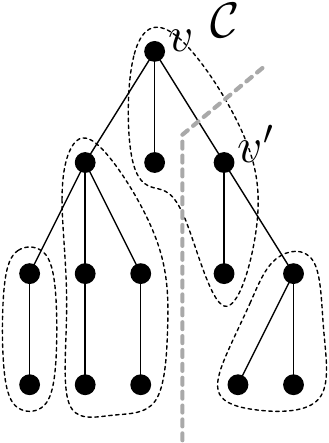}
		\caption{}
	\end{subfigure}
	\begin{subfigure}{.24\textwidth}
		\centering
		\includegraphics[page=6,scale=0.7]{key_dynamic.pdf}
		\caption{}
	\end{subfigure}
	\begin{subfigure}{.24\textwidth}
		\centering
		\includegraphics[page=7,scale=0.7]{key_dynamic.pdf}
		\caption{}
	\end{subfigure}
	\begin{subfigure}{.24\textwidth}
		\centering
		\includegraphics[page=8,scale=0.7]{key_dynamic.pdf}
		\caption{}
	\end{subfigure}
	\caption{Removing the edge $e=\{v,v'\}$ when $e$ connects nodes in different clusters. (a) Initial situation. (b) Induced clustering $\mathcal{A}$ when $e$ is removed. (c) Replacing $\mathcal{A}$ with another clustering $\mathcal{B}$. (d) Restoring the edge $e$ and obtaining a new clustering $\mathcal{C}'$.}
	\label{key_e_out}
\end{figure}
\begin{figure}[t]
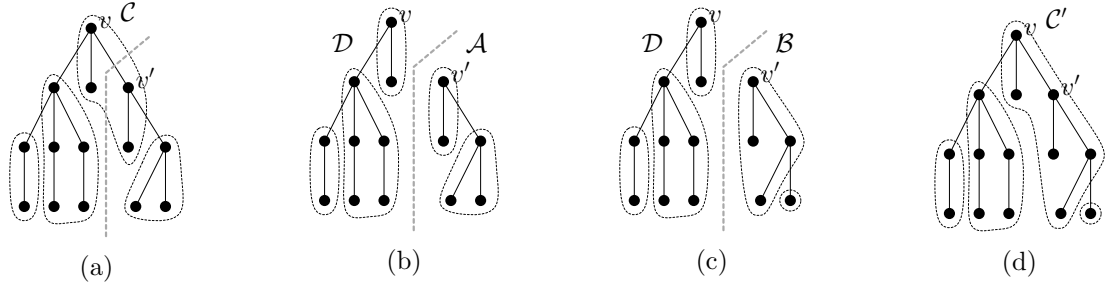

	\centering
	\begin{subfigure}{.24\textwidth}
		\centering
		\includegraphics[page=1,scale=0.7]{key_dynamic.pdf}
		\caption{}
	\end{subfigure}
	\begin{subfigure}{.24\textwidth}
		\centering
		\includegraphics[page=2,scale=0.7]{key_dynamic.pdf}
		\caption{}
	\end{subfigure}
	\begin{subfigure}{.24\textwidth}
		\centering
		\includegraphics[page=3,scale=0.7]{key_dynamic.pdf}
		\caption{}
	\end{subfigure}
	\begin{subfigure}{.24\textwidth}
		\centering
		\includegraphics[page=4,scale=0.7]{key_dynamic.pdf}
		\caption{}
	\end{subfigure}
	\caption{Removing the edge $e=\{v,v'\}$ when $e$ is inside a cluster. (a) Initial situation. (b) Induced clustering $\mathcal{A}$ when $e$ is removed. (c) Replacing $\mathcal{A}$ with another clustering $\mathcal{B}$. (d) Restoring the edge $e$ and obtaining a new clustering $\mathcal{C}'$.}
	\label{key_e_in}
\end{figure}

In the next subsections we show how to perform the operations \textbf{UpToParent} and \textbf{AddChildTree}. %analyze two basic ``operations'' of our algorithm:  
%\begin{itemize}
%\item \textbf{UpToParent}: computes $O_{\overline{S}}$ from $O_S$, where root of $S$ is $v$, and $\overline{S}$ is the subtree formed by $S$ and $p(v)$, see Figure~\ref{up2parent}.
%
%\item \textbf{AddChildTree}: computes $O_{P}$ from $O_S$ and $O_Q$, where the root of $Q$ is $v$ and the root of $S$ is $p(v)$, and $P$ is the resultant subtree of joining $S$ and $Q$, see Figure~\ref{merging}.
%\end{itemize} 

In order to simplify the formulas in the next subsections we introduce the following total order: 
% comparison among the elements in $\{\badcltr\}\cup (w(E)\cup\{0\}\times\mathbb{R}_{[0,1]})$ because we need it in the next sections.
Let $(M,b)$ and $(M',b')$ be two ordered pairs. We say that $(M,b)=(M',b')$ if $M=M'$ and $b=b'$. We say that $(M,b)<(M',b')$ if $b<b'$, or if $b=b'$ and $M<M'$.

 \begin{figure}
\centering
\begin{subfigure}{.45\textwidth}
\centering
\includegraphics[page =7, height=4cm]{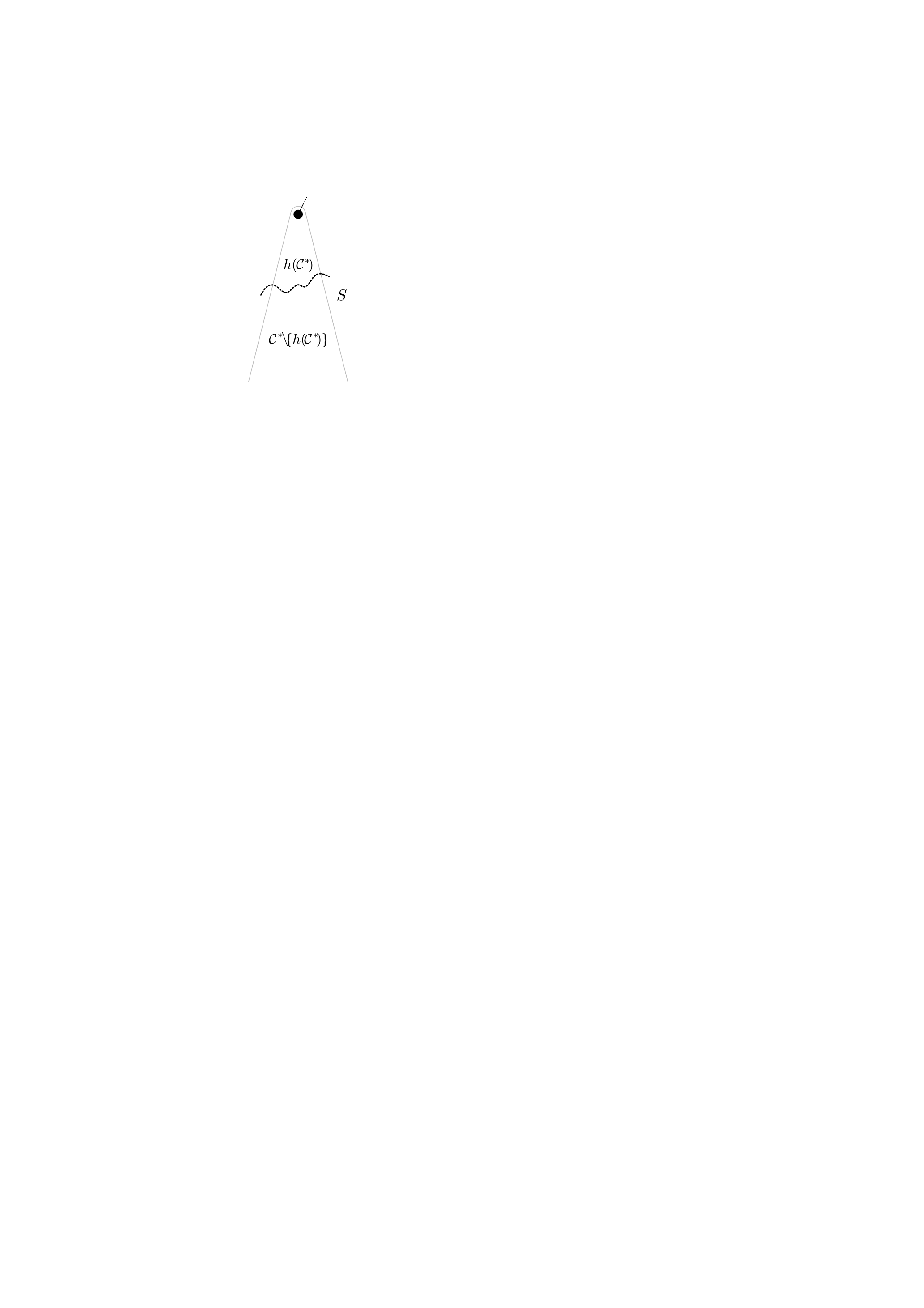}
\caption{}
\label{up2parent}
\end{subfigure}
\begin{subfigure}{.45\textwidth}
	\centering
	\includegraphics[page =8, , height=4cm ]{new_images.pdf}
	\caption{}
	\label{merging}
\end{subfigure}
\caption{(a) Subtree $\overline{S}$ formed by the subtree $S=T_v$ and $p(v)$. (b) Subtree $P$ formed by joining the subtree $Q=T_v$ and a subtree $S$ rooted at $p(v)$.}
 \end{figure}
 
 \subsection{\textbf{UpToParent}: computing $O_{\overline{S}}$ from $O_S$}\label{S2Sp}

  \begin{figure}[t]
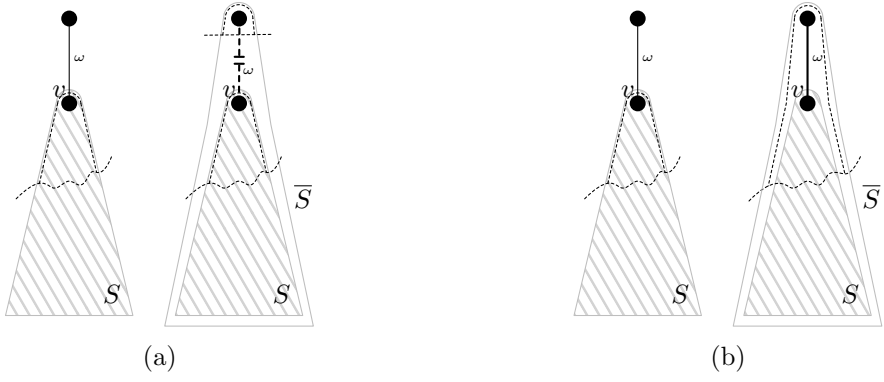

  	\centering
  	\begin{subfigure}{.45\textwidth}
  		\centering
  		\includegraphics[page=3]{new_images.pdf}
  		\caption{}
  		\label{up2parent_cut}
  	\end{subfigure}
  	\begin{subfigure}{.45\textwidth}
  		\centering
  		\includegraphics[page=4]{new_images.pdf}
  		\caption{}
        \label{up2parent_Ncut}
  	\end{subfigure}
  	\caption{Construction of a clustering of $\overline{S}$ based in one of $S$. (a) The edge $\{v,p(v)\}$ is cut. (b) The edge $\{v,p(v)\}$ is not cut.}
  \end{figure}
  
 Let $S=T_v\neq T$ be a subtree and let $\overline{S}$ denote the tree formed by the union of $S$ and $p(v)$. In this section we will show how to compute $O_{\overline{S}}$, assuming that we already know $O_S$. Let $\omega=w(\{v,p(v)\})$. If we are computing $O_{\overline{S}}(l,\mu)$, then:
 
 \begin{claim}\label{S2Sp-res1}
 	If $\mu=1$, then:
$$O_{\overline{S}}(l,\mu)=\left(\omega, \min_{\mu'}\left\{\max\left\{\omega,O_S(l-1,\mu')[2],\frac{\max\{\omega,O_S(l-1,\mu')[1]\}}{\mu'}\right\}\right\}\right)$$
 \end{claim}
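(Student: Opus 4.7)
The plan is to unpack what the condition $\mu=1$ imposes on the head cluster of $\overline{S}$, and then invoke Lemma~\ref{opt_struct} to reduce any optimal clustering of $\overline{S}$ to one whose restriction to $S$ is already encoded by $O_S$.

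First, I would note that the head cluster of any clustering of $\overline{S}$ must contain the root $p(v)$. By the convention introduced just before equation~(\ref{head_measure}), the value $\mu=\min(E(h(\mathcal{C})))=1$ is forced exactly when $h(\mathcal{C})$ is a single node, which here can only be $\{p(v)\}$. The unique edge of $\overline{S}$ incident to $p(v)$ is $\{v,p(v)\}$, so $\max(Out_{\overline{S}}(h(\mathcal{C})))=\omega$ for every $\mathcal{C}\in\cltring{H}{l,\overline{S},1}$. This yields the first coordinate $M=\omega$ with no tie-breaking needed.

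Next, I would decompose any such $\mathcal{C}$ as $\{\{p(v)\}\}\cup\mathcal{A}$, where $\mathcal{A}$ is an $(l-1)$-clustering of $S$. Setting $\mu'=\min(E(h(\mathcal{A})))$ and $M''=\max(Out_S(h(\mathcal{A})))$, I would evaluate cluster by cluster. The singleton $\{p(v)\}$ contributes $\omega$. The head cluster $h(\mathcal{A})$ contains $v$, so its outgoing set in $\overline{S}$ is $Out_S(h(\mathcal{A}))\cup\{\{v,p(v)\}\}$, giving $\varPhi_{\overline{S}}(h(\mathcal{A}))=\max\{\omega,M''\}/\mu'$. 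Every other cluster of $\mathcal{A}$ keeps the same outgoing set, hence the same evaluation. Combining,
\[
\Phi_{\overline{S}}(\mathcal{C})=\max\!\left\{\omega,\;\frac{\max\{\omega,M''\}}{\mu'},\;\Phi_S(\mathcal{A}\setminus\{h(\mathcal{A})\})\right\}.
\]

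Finally, I would apply Lemma~\ref{opt_struct} with $Q=S$ and $e=\{v,p(v)\}$ (identifying $\overline{S}$ with the ambient $S$ of the lemma statement): for any optimal $\mathcal{C}$ one may replace $\mathcal{A}$ by the clustering encoded by $O_S(l-1,\mu')$ without changing the encoding pair, so $M''=O_S(l-1,\mu')[1]$ and $\Phi_S(\mathcal{A}\setminus\{h(\mathcal{A})\})\le O_S(l-1,\mu')[2]$. Minimizing over $\mu'\in w(E)\cup\{1\}$ then yields the formula; the explicit $\omega$ in the outer max is dominated by $\max\{\omega,O_S(l-1,\mu')[1]\}/\mu'$ and is kept only for readability, while indices $\mu'$ with $O_S(l-1,\mu')=\badcltr$ are simply skipped. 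The main obstacle I expect is the last step: $O_S(l-1,\mu')[2]$ bundles the head and the headless evaluations in $S$, so a short case split on whether the maximum defining $O_S(l-1,\mu')[2]$ is attained at $h(\mathcal{A})$ or elsewhere is needed to justify substituting $O_S(l-1,\mu')[2]$ for the true headless value $\Phi_S(\mathcal{A}\setminus\{h(\mathcal{A})\})$; the equality holds because the head contribution in $\overline{S}$ is always at least $\max\{\omega,O_S(l-1,\mu')[1]\}/\mu'\ge O_S(l-1,\mu')[1]/\mu'$, which is exactly the term that would have appeared inside $O_S(l-1,\mu')[2]$.
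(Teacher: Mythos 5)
Your proposal is correct and follows essentially the same route as the paper's proof: $\mu=1$ forces $h(\mathcal{C})=\{p(v)\}$ so the edge $\{v,p(v)\}$ is cut and $M=\omega$, the remainder is an $(l-1)$-clustering of $S$ that Lemma~\ref{opt_struct} lets you replace by one encoded by $O_S(l-1,\mu')$, and the substitution of $O_S(l-1,\mu')[2]$ for the headless evaluation is justified exactly as in the paper, by observing that the head's restricted contribution $O_S(l-1,\mu')[1]/\mu'$ is dominated by the term $\max\{\omega,O_S(l-1,\mu')[1]\}/\mu'$ already present in the max.
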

 \begin{proof}
Let $\mathcal{C}$ be a clustering encoded as $O_{\overline{S}}(l,\mu)=(M,b)$. We have that $h(\mathcal{C})=\{p(v)\}$ because $\mu=1$.
 	Then, the edge $\{v,p(v)\}$ is cut and $M=\omega$, see Figure~\ref{up2parent_cut}. Using (\ref{b_formule}) leads to $b=\max\{\omega, \Phi(\mathcal{C}\setminus\{h(\mathcal{C})\})\}$. Note that $\mathcal{C}\setminus\{h(\mathcal{C})\}$ is an $(l-1)$-clustering $\mathcal{C'}$ of $S$. 
    According to Lemma~\ref{opt_struct}, $\mathcal{C'}$ is encoded as $O_S(l-1,\mu')$ for some $\mu'$, and
    then:
    $$\Phi_S(\mathcal{C'})=O_S(l-1,\mu')[2]\text{\; and \; }\varPhi(h(\mathcal{C'}))=\frac{\max\{\omega,O_S(l-1,\mu')[1]\}}{\mu'}.$$
    
    It is easy to see that:
$$\Phi(\mathcal{C'})=\max\left\{\varPhi(h(\mathcal{C'})),\Phi(\mathcal{C'}\setminus \{h(\mathcal{C'})\})\right\},$$
by using equation (\ref{headless_measure}) we have:
$$\Phi(\mathcal{C'})=\max\left\{\varPhi(h(\mathcal{C'})),\Phi_S(\mathcal{C'}\setminus \{h(\mathcal{C'})\})\right\}.$$
It is also easy to see that $\varPhi(h(\mathcal{C'}))\geq \varPhi_S(h(\mathcal{C'}))$, therefore:
$$\Phi(\mathcal{C'})=\max\left\{\varPhi(h(\mathcal{C'})),\varPhi_S(h(\mathcal{C'})),\Phi_S(\mathcal{C'}\setminus \{h(\mathcal{C'})\})\right\}.$$
The previous equation can be reduced to $\Phi(\mathcal{C'})=\max\left\{\varPhi(h(\mathcal{C'})),\Phi_S(\mathcal{C'})\right\}.$

\begin{eqnarray*}
\text{Finally,\; }\Phi(\mathcal{C}\setminus\{h(\mathcal{C})\})=\Phi(\mathcal{C'})&=&\max\left\{\Phi_S(\mathcal{C'}),\varPhi(h(\mathcal{C'}))\right\}\\
&=&\max\left\{O_S(l-1,\mu')[2],\frac{\max\{\omega,O_S(l-1,\mu')[1]\}}{\mu'}\right\}.
\end{eqnarray*}
 	
 \end{proof}
 
\begin{claim}\label{S2Sp-res2}
If $1>\mu>\omega$ then $O_{\overline{S}}(l,\mu)=\badcltr$.
\end{claim}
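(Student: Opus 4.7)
The strategy is to show that under the hypothesis $\omega < \mu < 1$, the set $\cltring{H}{l,\overline{S},\mu}$ is empty, which by Notation~\ref{opt_matrix} forces $O_{\overline{S}}(l,\mu)=\badcltr$. The argument is a short case analysis on the size of the head cluster, exploiting the fact that in $\overline{S}$ the newly added node $p(v)$ has degree one (its unique neighbor in $\overline{S}$ is $v$, via the edge of weight $\omega$).

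First I would fix an arbitrary $\mathcal{C}\in\cltring{H}{l,\overline{S},\mu}$ and recall that by definition its head cluster $h(\mathcal{C})$ is the cluster containing $p(v)$. I would split into two cases according to $|h(\mathcal{C})|$. If $|h(\mathcal{C})|=1$, then $h(\mathcal{C})=\{p(v)\}$ has no inner edges, and by the convention introduced just before Equation~(\ref{head_measure}), $\mu=\min(E(h(\mathcal{C})))=1$, contradicting the hypothesis $\mu<1$. If $|h(\mathcal{C})|\geq 2$, then $h(\mathcal{C})$ must contain some neighbor of $p(v)$ in $\overline{S}$; since $v$ is the only such neighbor, $v\in h(\mathcal{C})$ and therefore the edge $\{v,p(v)\}$ belongs to $E(h(\mathcal{C}))$. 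Consequently, $\mu=\min(E(h(\mathcal{C})))\leq w(\{v,p(v)\})=\omega$, which contradicts $\mu>\omega$.

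Both cases lead to contradictions under the hypothesis $\omega<\mu<1$, so no such clustering $\mathcal{C}$ exists; that is, $\cltring{H}{l,\overline{S},\mu}=\emptyset$, and by definition $O_{\overline{S}}(l,\mu)=\badcltr$.

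There is no real obstacle here: the entire argument rides on the observation that $p(v)$ is a degree-one node in $\overline{S}$, so whenever the head cluster is non-trivial it is forced to contain the edge of weight $\omega$, pinning $\mu$ at a value at most $\omega$; and the only remaining alternative (a singleton head cluster) is handled by the convention $\mu=1$. The only thing to be careful about is invoking the right conventions from the paragraph preceding Equation~(\ref{head_measure}) so that the degenerate singleton case is treated consistently.
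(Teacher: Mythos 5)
Your proof is correct and follows essentially the same route as the paper's: the paper argues by cases on whether the edge $\{v,p(v)\}$ is cut (cut implies $\mu=1$; not cut implies the edge lies in the head cluster, so $\mu\leq\omega$), which is exactly your case split on $|h(\mathcal{C})|$ since $p(v)$ has degree one in $\overline{S}$. Your version just spells out the degree-one observation and the singleton convention more explicitly than the paper does.
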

\begin{proof}
It is impossible to build a clustering with this encoding. If we cut $\{v,p(v)\}$, then $\mu=1$, and if we do not cut $\{v,p(v)\}$, then it is in the head cluster and then $\mu\leq \omega$.
\end{proof}

\begin{claim}\label{S2Sp-res3}
If $\mu=\omega$ then:
$$O_{\overline{S}}(l,\mu)=\min_{\mu'\geq\omega}\left\{\left(O_S(l,\mu')[1],\max\left\{\frac{O_S(l,\mu')[1]}{\omega},O_S(l,\mu')[2]\right\}\right)\right\}.$$
\end{claim}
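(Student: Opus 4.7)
The plan is to establish a bijection between clusterings counted by $O_{\overline{S}}(l,\omega)$ and $l$-clusterings of $S$ whose head-min is at least $\omega$, translate the two tracked coordinates through this bijection, and then apply Corollary~\ref{minmb} to replace the $S$-sub-clustering by the one encoded by $O_S(l,\mu')$.

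First, I would characterize the members of $\cltring{H}{l,\overline{S},\omega}$. The condition $\min(E(h(\mathcal{C})))=\omega<1$ forces $h(\mathcal{C})$ to contain at least one edge, and since $p(v)$'s only neighbor in $\overline{S}$ is $v$, the edge $\{v,p(v)\}$ must lie inside $h(\mathcal{C})$ (it is not cut); furthermore, every other edge of $h(\mathcal{C})$ must have weight at least $\omega$. Removing $p(v)$ from $h(\mathcal{C})$ therefore yields an $l$-clustering $\mathcal{C}'$ of $S$ whose head-min $\mu'\in w(E)\cup\{1\}$ satisfies $\mu'\geq\omega$ (the value $\mu'=1$ covers the case $h(\mathcal{C}')=\{v\}$). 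Conversely, any such $\mathcal{C}'$ extends to a unique $\mathcal{C}\in\cltring{H}{l,\overline{S},\omega}$ by re-attaching $p(v)$ to its head cluster, and the cluster count is preserved.

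Next, I would translate the two coordinates across this bijection. Because $p(v)$'s only neighbor in $\overline{S}$ is $v$, the outgoing edge sets coincide, giving $\max(Out_{\overline{S}}(h(\mathcal{C})))=\max(Out_S(h(\mathcal{C}')))$. The non-head clusters are identical, so $\Phi(\mathcal{C}\setminus\{h(\mathcal{C})\})=\Phi(\mathcal{C}'\setminus\{h(\mathcal{C}')\})$. Substituting into (\ref{clustering_measure}), and using $\omega\leq\mu'$ to observe that $\varPhi_S(h(\mathcal{C}'))=\max(Out_S(h(\mathcal{C}')))/\mu'\leq\max(Out_S(h(\mathcal{C}')))/\omega=\varPhi_{\overline{S}}(h(\mathcal{C}))$, I would conclude
\[
\Phi_{\overline{S}}(\mathcal{C})=\max\!\left\{\frac{\max(Out_S(h(\mathcal{C}')))}{\omega},\;\Phi_S(\mathcal{C}')\right\}.
\]

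Finally, Corollary~\ref{minmb} gives the key step: within each class $\cltring{H}{l,S,\mu'}$ with $O_S(l,\mu')\neq\badcltr$, the coordinates $\max(Out_S(h(\mathcal{C}')))$ and $\Phi_S(\mathcal{C}')$ are simultaneously lower-bounded by $O_S(l,\mu')[1]$ and $O_S(l,\mu')[2]$, with both bounds attained by the encoded clustering. Since the maps $M\mapsto M$ and $(M,b)\mapsto\max\{M/\omega,b\}$ are non-decreasing in each input, this replacement simultaneously minimizes both coordinates of the induced pair for fixed $\mu'$. Taking the lex-minimum (in the total order defined earlier) over all admissible $\mu'\geq\omega$ and discarding the $\badcltr$ values (which correspond to empty $\cltring{H}{l,S,\mu'}$) yields exactly the formula in the claim.

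The main obstacle I expect is the careful bookkeeping in the first step: verifying that the condition $\min(E(h(\mathcal{C})))=\omega$ is captured precisely by ``$\{v,p(v)\}$ uncut and $\mu'\geq\omega$,'' and that the inequality $\varPhi_S(h(\mathcal{C}'))\leq\varPhi_{\overline{S}}(h(\mathcal{C}))$ is strong enough to absorb that term when collapsing $\Phi_S(\mathcal{C}')$ inside the outer max. Both facts rest on the elementary observation that dividing a fixed non-negative numerator by a smaller positive denominator yields a larger quotient.
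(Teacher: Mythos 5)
Your proposal is correct and takes essentially the same route as the paper: the case $\mu=\omega$ forces the edge $\{v,p(v)\}$ to be uncut, the resulting clustering is the head-extension of an $l$-clustering of $S$ with head-min $\mu'\geq\omega$, the outgoing maximum is unchanged, and the evaluation becomes $\max\{M/\omega,\Phi_S(\mathcal{C}')\}$, after which one optimizes over $\mu'$. The only cosmetic difference is that you justify the substitution step via the explicit bijection together with Corollary~\ref{minmb} and monotonicity of the two tracked coordinates, whereas the paper invokes Lemma~\ref{opt_struct} directly; the underlying optimal-substructure argument is the same.
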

\begin{proof}
Let $\mathcal{C}$ be a clustering encoded as $O_{\overline{S}}(l,\mu)=(M,b)$. In this case $\mu\leq\omega$, so, $\{v,p(v)\}$ is not cut, see Figure~\ref{up2parent_Ncut}. According to Lemma~\ref{opt_struct}, $\mathcal{C}$ is formed by adding $p(v)$ to the head cluster of an $l$-clustering $\mathcal{C'}$ of $S$, which is encoded as $O_S(l,\mu')$ for some $\mu'\geq\omega$. Note that $\max(Out_{\overline{S}}(h(\mathcal{C})))=\max(Out_S(h(\mathcal{C'})))$ and then $M=O_S(l,\mu')[1]$. 
It is easy to see that $\Phi_{\overline{S}}(\mathcal{C}\setminus\{h(\mathcal{C})\})=\Phi_{S}(\mathcal{C}'\setminus\{h(\mathcal{C}')\})$ and $\varPhi_{\overline{S}}(h(\mathcal{C}))\geq \varPhi_{S}(h(\mathcal{C}'))$, so:
\begin{eqnarray*}
b=\Phi_{\overline{S}}(\mathcal{C})&=&\max\left\{\varPhi_{\overline{S}}(h(\mathcal{C})),\Phi_{\overline{S}}(\mathcal{C}\setminus\{h(\mathcal{C})\})\right\}\\
&=&\max\left\{\varPhi_{\overline{S}}(h(\mathcal{C})),\varPhi_{S}(h(\mathcal{C}')),\Phi_{S}(\mathcal{C}'\setminus\{h(\mathcal{C}')\})\right\}\\
&=&\max\left\{\varPhi_{\overline{S}}(h(\mathcal{C})),\Phi_S(\mathcal{C}')\right\}.
\end{eqnarray*}
The result follows.
\end{proof}

\begin{claim}\label{S2Sp-res4}
If $\mu<\omega$, then:
$$O_{\overline{S}}(l,\mu)=O_S(l,\mu).$$
\end{claim}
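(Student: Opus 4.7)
The plan is to show that, under the assumption $\mu<\omega$, the set $\cltring{H}{l,\overline{S},\mu}$ of feasible clusterings of $\overline{S}$ is in encoding-preserving bijection with $\cltring{H}{l,S,\mu}$; the equality $O_{\overline{S}}(l,\mu)=O_S(l,\mu)$ will then follow directly from the three-part minimization in Notation~\ref{opt_matrix}.

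First I would observe that in any clustering $\mathcal{C}\in\cltring{H}{l,\overline{S},\mu}$, the edge $\{v,p(v)\}$ cannot be a cut edge. Indeed, cutting it would make $p(v)$ a singleton head cluster, so $E(h(\mathcal{C}))$ would be empty and $\mu$ would equal $1$ by convention, contradicting $\mu<\omega\leq 1$. Hence both $v$ and $p(v)$ lie in $h(\mathcal{C})$, and the edge $\{v,p(v)\}$ of weight $\omega$ is one of the internal head-cluster edges.

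Next, I would define the map $\mathcal{C}\mapsto \mathcal{C}'$ by removing $p(v)$ from $h(\mathcal{C})$ (equivalently, restricting $\mathcal{C}$ to $S$). Because $p(v)$ is a leaf of $\overline{S}$, the non-head clusters of $\mathcal{C}$ coincide exactly with the non-head clusters of $\mathcal{C}'$, so the map is an $l$-clustering-preserving bijection whose inverse simply re-attaches $p(v)$ to the head. Then I would check that every ingredient of the encoding is preserved: the minimum internal head weight is unchanged, because deleting the single value $\omega>\mu$ from the multiset of head-edge weights does not change its minimum $\mu$; the quantity $\max(Out_S(h(\mathcal{C}')))$ equals $\max(Out_{\overline{S}}(h(\mathcal{C})))$, since the only edge of $\overline{S}$ not in $S$ is $\{v,p(v)\}$, and it is internal to the head in $\mathcal{C}$ rather than outgoing; and the non-head clusters have identical evaluations in both trees. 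Combining these gives $\varPhi_{\overline{S}}(h(\mathcal{C}))=\varPhi_S(h(\mathcal{C}'))$ and therefore $\Phi_{\overline{S}}(\mathcal{C})=\Phi_S(\mathcal{C}')$.

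The only subtle point I foresee is handling the degenerate case $\cltring{H}{l,S,\mu}=\emptyset$: here the bijection transfers emptiness in both directions, so $O_{\overline{S}}(l,\mu)=\badcltr=O_S(l,\mu)$. Otherwise, since the bijection preserves the pair $(M,b)$ attached to each clustering, the clustering selected by the definition of $O_S(l,\mu)$ (minimize $b$, then break ties by minimizing $M$) corresponds under the bijection to the clustering selected by the definition of $O_{\overline{S}}(l,\mu)$, yielding the claimed equality.
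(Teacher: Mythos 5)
Your proposal is correct and follows essentially the same route as the paper: the key observations in both are that $\mu<\omega$ forces the edge $\{v,p(v)\}$ to remain uncut, so $p(v)$ simply joins the head cluster without altering $\min(E(h(\cdot)))$, $\max(Out(\cdot))$, or any cluster evaluation. Your version is somewhat more explicit (an encoding-preserving bijection rather than an appeal to Lemma~\ref{opt_struct}), but the substance is identical.
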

\begin{proof}
Let $\mathcal{C}$ be a clustering encoded as $O_{\overline{S}}(l,\mu)=(M,b)$.
In this case $\mu\leq\omega$, and consequently $\{v,p(v)\}$ is not cut, see Figure~\ref{up2parent_Ncut}. According to Lemma~\ref{opt_struct}, $\mathcal{C}$ is formed by adding $p(v)$ to the head cluster of an $l$-clustering $\mathcal{C'}$ of $S$ which is encoded as $O_S(l,\mu)$. It is easy to see that $\max(Out_{\overline{S}}(h(\mathcal{C})))=\max(Out_{S}(h(\mathcal{C}')))=M$ and $\Phi_{\overline{S}}(\mathcal{C})=\Phi_S(\mathcal{C'})=b$. The result follows.
\end{proof}

\begin{theorem}
Let $S=T_v$ such that $T_v\neq T$, and let $\overline{S}$ be the subtree formed by adding $p(v)$ to $S$.
If we know the function $O_S$, then the function $O_{\overline{S}}$ can be computed in $O(kn)$ time.
\end{theorem}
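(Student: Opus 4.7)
The plan is to compute the table $O_{\overline{S}}$ cell by cell by direct application of Claims~\ref{S2Sp-res1} through~\ref{S2Sp-res4}, and then account for the total work. Using the tabular view from Remark~\ref{O_S as table}, the table has $k$ rows indexed by $l\in\{1,\dots,k\}$ and $n$ columns indexed by the values $\mu\in w(E)\cup\{1\}$, for a total of $kn$ cells to populate.

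First I would partition the column labels according to how $\mu$ compares with $\omega=w(\{v,p(v)\})$ and with $1$, since each of the four claims corresponds to exactly one of the four resulting regions. In each row, the unique column with $\mu=1$ is filled by Claim~\ref{S2Sp-res1}, which minimizes a quantity over the possible values of $\mu'$ and thus costs $O(n)$ assuming constant-time random-access lookups into $O_S$. The unique column with $\mu=\omega$ is filled analogously by Claim~\ref{S2Sp-res3} in $O(n)$ time. Every column with $1>\mu>\omega$ is filled by Claim~\ref{S2Sp-res2} with the constant value $\badcltr$ in $O(1)$ per cell, and every column with $\mu<\omega$ is filled by Claim~\ref{S2Sp-res4} by directly copying $O_S(l,\mu)$, again $O(1)$ per cell.

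Summing per row gives two cells of cost $O(n)$ plus at most $n-1$ cells of cost $O(1)$, that is $O(n)$ per row, and hence $O(kn)$ in total across all $k$ rows. Correctness of each cell is already guaranteed by the corresponding claim, so the theorem reduces to this accounting.

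I expect no substantive obstacle here: the entire argument is bookkeeping. The only minor points to verify are that the minimizations in Claims~\ref{S2Sp-res1} and~\ref{S2Sp-res3} can indeed be performed by a single linear scan over the $\mu'$ column indices of $O_S$, using the total order on ordered pairs introduced just before this subsection to compare candidates, and that values such as $\omega$, $\max(\ldots)$, and the quotient in each formula are computed in $O(1)$. Both facts are immediate by inspection, so the $O(kn)$ bound follows.
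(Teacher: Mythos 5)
Your proposal is correct and follows essentially the same argument as the paper: view $O_{\overline{S}}$ as the $k\times n$ table of Remark~\ref{O_S as table}, apply the appropriate one of Claims~\ref{S2Sp-res1}--\ref{S2Sp-res4} to each cell, and charge $O(n)$ to the $O(k)$ cells with $\mu=1$ or $\mu=\omega$ and $O(1)$ to the rest. Your extra remarks about the linear scan over $\mu'$ and the total order on pairs only make the same accounting slightly more explicit.
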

\begin{proof}
Think in $O_S(\cdot,\cdot)$ as table (see Remark~\ref{O_S as table}), lets analyze the time to compute the values of every cell in this table.
By Claim~\ref{S2Sp-res1}, computing the values of form $O_{\overline{S}}(l,1)$ takes $O(n)$ time per cell and there are $O(k)$ cells of this form, resulting in the total time of $O(kn)$. By Claim~\ref{S2Sp-res2}, the values of form $O_{\overline{S}}(l,\mu)$ with $1>\mu>\omega$ take constant time per cell and given that there are $O(kn)$ cells of this form, the total time results to $O(kn)$. 
By Claim~\ref{S2Sp-res3}, computing the values of the form $O_{\overline{S}}(l,\omega)$ takes $O(n)$ time per cell and there are $O(k)$ cells of this form, yielding a total time of $O(kn)$. Finally, by Claim~\ref{S2Sp-res4}, computing the values of form $O_{\overline{S}}(l,\mu)$ with $\mu<\omega$ takes constant time per cell and there are $O(kn)$ cells of this form, resulting in a total time of $O(kn)$. The result follows.
\end{proof}

 \subsection{\textbf{AddChildTree}: computing $O_P$ from $O_S$ and $O_Q$}\label{SQ2P}
\begin{figure}[t]
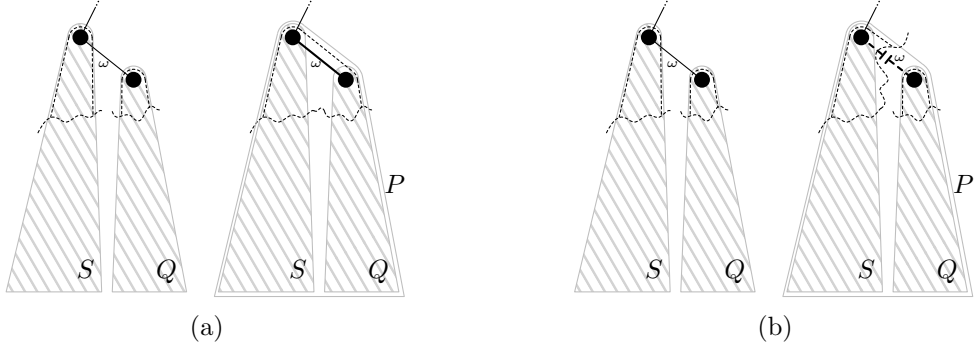

	\centering
	\begin{subfigure}{.45\textwidth}
		\centering
		\includegraphics[page=5]{new_images.pdf}
		\caption{}
		\label{merging_nocut}
	\end{subfigure}
	\begin{subfigure}{.45\textwidth}
		\centering
		\includegraphics[page=6]{new_images.pdf}
		\caption{}
		\label{merging_cut}
	\end{subfigure}
	\caption{Construction of a clustering of $\overline{S}$ based in one of $S$. (a) The edge $\{v,p(v)\}$ is cut. (b) The edge $\{v,p(v)\}$ is not cut.}
\end{figure}

Let $Q=T_v\neq T$ and let $S$ be a subtree rooted in $p(v)$ such that $S$ does not contain $v$. Let $P$ denote the subtree which results from joining on $S$ and $Q$. We show how to compute $O_P$ from $O_S$ and $O_Q$. Let $\omega=w(\{v,p(v)\})$. Let $\overline{Q}$ be the subtree formed by adding $p(v)$ to $Q$. Pre-compute $O_{\overline{Q}}$ from $O_Q$ using the claims of the previous subsection. If we are computing $O_{P}(l,\mu)$, then we have that:

\begin{claim}\label{add_res1}
If $\mu=1$ then:
$$O_P(l,\mu)=\min_{1\leq x<l}\left\{\left(\max\left\{\omega, O_S(x,1)[1]\right\},\max\left\{O_S(x,1)[2],O_{\overline{Q}}(l-x+1,1)[2]\right\}\right)\right\}$$
\end{claim}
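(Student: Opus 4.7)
The plan is to prove both the $\geq$ and $\leq$ inequalities under the lex order on pairs. The decisive observation is that $\mu=1$ forces the head cluster of every $\mathcal{C}\in\cltring{H}{l,P,1}$ to be the singleton $\{p(v)\}$; in particular the edge $e=\{p(v),v\}$ of weight $\omega$ and every edge from $p(v)$ to its children inside $S$ must be cut. Consequently $\mathcal{C}$ splits into an $x$-clustering $\mathcal{A}_S$ of $S$ with head $\{p(v)\}$ and an $(l-x)$-clustering $\mathcal{A}_Q$ of $Q$, for some $1\leq x\leq l-1$. Adjoining $\{p(v)\}$ to $\mathcal{A}_Q$ as a new singleton head promotes it to an $(l-x+1)$-clustering $\mathcal{A}_{\overline{Q}}\in\cltring{H}{l-x+1,\overline{Q},1}$, and the pairing $(\mathcal{A}_S,\mathcal{A}_{\overline{Q}})\leftrightarrow\mathcal{C}$ is a bijection once the common head $\{p(v)\}$ is identified.

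For the $\geq$ direction I would compute the encoding of $\mathcal{C}$ directly. Since $Out_P(\{p(v)\})$ is the disjoint union $Out_S(\{p(v)\})\cup\{e\}$, we get $M_P=\max\{\max(Out_S(\{p(v)\})),\omega\}$. Each non-head cluster of $\mathcal{C}$ either appears as a non-head cluster of $\mathcal{A}_S$ with identical evaluation, or as a cluster of $\mathcal{A}_Q$ whose evaluation in $P$ coincides with its evaluation as a non-head cluster in $\overline{Q}$, because the surrounding outgoing edges are the same. Combining this with $\varPhi_S(\{p(v)\})=\max(Out_S(\{p(v)\}))$ and $\varPhi_{\overline{Q}}(\{p(v)\})=\omega$, the overall evaluation collapses to $b_P=\max\{\Phi_S(\mathcal{A}_S),\Phi_{\overline{Q}}(\mathcal{A}_{\overline{Q}})\}$. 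Applying Corollary~\ref{minmb} to $\mathcal{A}_S\in\cltring{H}{x,S,1}$ and $\mathcal{A}_{\overline{Q}}\in\cltring{H}{l-x+1,\overline{Q},1}$ then gives the componentwise lower bounds $M_P\geq\max\{\omega,O_S(x,1)[1]\}$ and $b_P\geq\max\{O_S(x,1)[2],O_{\overline{Q}}(l-x+1,1)[2]\}$, and minimizing over $x$ yields the required inequality.

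For the $\leq$ direction, for each admissible $x$ (both encodings different from $\badcltr$) I would take optimal representatives of $O_S(x,1)$ and $O_{\overline{Q}}(l-x+1,1)$ and glue them at their common head $\{p(v)\}$ to obtain a clustering of $P$ in $\cltring{H}{l,P,1}$. The same computation shows this candidate realises exactly the pair on the right-hand side, so it bounds $O_P(l,1)$ from above. Indices $x$ for which either coordinate is $\badcltr$ are harmless because they are never picked by the $\min$. The principal bookkeeping hazard is ensuring the head evaluation is counted exactly once; this is precisely why the claim is stated in terms of $O_{\overline{Q}}$ rather than $O_Q$, since $\overline{Q}$ is what ``owns'' the edge $\omega$ and therefore already incorporates its contribution to the head's outgoing weight.
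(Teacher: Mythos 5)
Your argument is correct and follows essentially the same route as the paper: both decompose a clustering with singleton head $\{p(v)\}$ into a piece on $S$ and a piece on $\overline{Q}$ sharing that head, derive $M=\max\{\omega,O_S(x,1)[1]\}$ and $b=\max\{O_S(x,1)[2],O_{\overline{Q}}(l-x+1,1)[2]\}$, and minimize over the split parameter $x$. The only difference is presentational: where the paper invokes Lemma~\ref{opt_struct} to replace the induced sub-clusterings by optimally encoded ones, you make the same exchange explicit via a two-sided bound using Corollary~\ref{minmb} together with a gluing construction.
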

\begin{proof}
Let $\mathcal{C}$ be a clustering encoded as $O_{P}(l,\mu)=(M,b)$. We have that $h(\mathcal{C})=\{p(v)\}$, since $\mu=1$.
 	Then, the edge $\{v,p(v)\}$ is cut, see Figure~\ref{merging_cut}. $\mathcal{C}$ is based in two clusterings $\mathcal{C}'$ and $\mathcal{C}''$ with encodings $O_S(x,1)$ and $O_{\overline{Q}}(l-x+1,1)$ for some $1\leq x<l$. It easy to see that:
    \begin{eqnarray*}
    M=\max(Out_P(h(\mathcal{C})))&=&\max\left\{\omega,\max(Out_S(h(\mathcal{C'})))\right\}\\
    &=&\max\left\{\omega,O_S(x,1)[1]\right\}.
    \end{eqnarray*}
    It is easy to see that:
    $$b=\Phi_P(\mathcal{C})=\max\left\{\varPhi_P(h(\mathcal{C})),\Phi_P(\headless{C})\right\}.$$
    Note also that: $$\varPhi_P(\head{C})=\max\left\{\varPhi_S(\head{C'}),\varPhi_{\overline{Q}}(\head{C''})\right\},\text{\; and}$$
    $$\Phi_P(\headless{C})=\max\left\{\Phi_S(\headless{C'}),\Phi_{\overline{Q}}(\headless{C''})\right\}.$$
    Then,
    $$\Phi_P(\mathcal{C})=\max\left\{\varPhi_S(\head{C'}),\varPhi_{\overline{Q}}(\head{C''}),\Phi_S(\headless{C'}),\Phi_{\overline{Q}}(\headless{C''})\right\}.$$
    And this can be rewritten as:
    
    \begin{align*}
    \Phi_P(\mathcal{C})&=\max\left\{\Phi_S(\mathcal{C'}),\Phi_{\overline{Q}}(\mathcal{C''})\right\}\\
    &=\max\left\{O_S(x,1)[2],O_{\overline{Q}}(l-x+1,1)[2]\right\}.
    \end{align*}    
\end{proof}

\begin{claim}\label{add_res2}
If $1>\mu>\omega$ then:
$$O_P(l,\mu)=\min_{1\leq x<l}\left\{\left(M,\max\left\{\frac{M}{\mu},O_S(x,\mu)[2],O_{\overline{Q}}(l-x+1,1)[2]\right\}\right)\right\},$$
where $M=\max\left\{\omega, O_S(x,\mu)[1]\right\}$.
\end{claim}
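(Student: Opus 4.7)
The plan is to mirror the structure of Claim~\ref{add_res1}, adapting it to the case in which the head cluster of $\mathcal{C}$ has more than one vertex. First, I would pin down the shape of any clustering $\mathcal{C}$ of $P$ encoded by $O_P(l,\mu)$. Since $p(v)$ is the root of $P$, it lies in $h(\mathcal{C})$, and the hypothesis $\mu>\omega$ forces the edge $\{v,p(v)\}$, whose weight is $\omega$, to lie outside $E(h(\mathcal{C}))$. Therefore $v\notin h(\mathcal{C})$, and $\{v,p(v)\}$ is a cut edge crossing from $h(\mathcal{C})$ into some cluster contained in $Q$.

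The second step is to decompose $\mathcal{C}$ along this cut. Let $\mathcal{C}'$ be the collection of clusters of $\mathcal{C}$ contained in $S$; this is a clustering of $S$ whose head coincides with $h(\mathcal{C})$, so $\min(E(h(\mathcal{C}')))=\mu$. If $|\mathcal{C}'|=x$, then $1\le x<l$, the remaining $l-x$ clusters of $\mathcal{C}$ lie inside $Q$, and adjoining the singleton $\{p(v)\}$ as a head yields a clustering $\mathcal{C}''$ of $\overline{Q}$ with $l-x+1$ clusters and $\min(E(h(\mathcal{C}'')))=1$. By the same optimal-substructure reasoning as in Lemma~\ref{opt_struct}, applied separately on each side of the cut edge $\{v,p(v)\}$, I may assume $\mathcal{C}'$ is encoded by $O_S(x,\mu)$ and $\mathcal{C}''$ by $O_{\overline{Q}}(l-x+1,1)$ without worsening $\Phi_P(\mathcal{C})$ or $\max(Out_P(h(\mathcal{C})))$.

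The rest is a direct calculation. Since $Out_P(h(\mathcal{C}))=Out_S(h(\mathcal{C}'))\cup\{\{v,p(v)\}\}$, I get $M=\max\{\omega,\,O_S(x,\mu)[1]\}$. For $b$, I would unfold $\Phi_P(\mathcal{C})$ using equations~(\ref{head_measure})--(\ref{clustering_measure}), noting that any non-head cluster of $\mathcal{C}$ contained in $Q$ has identical outgoing and inner edge sets in $P$ and in $\overline{Q}$ (because $\{v,p(v)\}$ is the only edge of $P$ between $Q$ and $S$). This expresses $\Phi_P(\mathcal{C})$ as the maximum of $M/\mu$, $\Phi_S(\mathcal{C}'\setminus\{h(\mathcal{C}')\})$, and $\Phi_{\overline{Q}}(\mathcal{C}''\setminus\{h(\mathcal{C}'')\})$. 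The two head contributions, namely $\varPhi_S(h(\mathcal{C}'))=O_S(x,\mu)[1]/\mu$ and $\varPhi_{\overline{Q}}(h(\mathcal{C}''))=\omega$, are both at most $M/\mu$ (the second because $\omega<\mu\le M/\mu$), so they can be absorbed into the $M/\mu$ term, giving $b=\max\{M/\mu,\,O_S(x,\mu)[2],\,O_{\overline{Q}}(l-x+1,1)[2]\}$. Minimizing over $x\in\{1,\dots,l-1\}$ in the ordered-pair order of this section yields the stated formula.

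The main obstacle is bookkeeping rather than conceptual depth: I need to verify that the decomposition of $\mathcal{C}$ into $\mathcal{C}'$ and $\mathcal{C}''$ produces sub-clusterings of the exact shapes needed to invoke the optimal-substructure argument, that the $\varPhi$-value of the cluster of $\mathcal{C}$ containing $v$ is the same whether computed in $P$ or in $\overline{Q}$, and that both head contributions can be legitimately absorbed into the $M/\mu$ term. Each check is routine once set up carefully, but together they are what promotes the formula from a one-sided inequality to genuine equality with $O_P(l,\mu)$.
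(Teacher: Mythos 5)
Your argument is correct and follows essentially the same route as the paper's proof: the hypothesis $\mu>\omega$ forces $\{v,p(v)\}$ to be cut, $\mathcal{C}$ decomposes into a clustering of $S$ encoded by $O_S(x,\mu)$ and a clustering of $\overline{Q}$ encoded by $O_{\overline{Q}}(l-x+1,1)$ via the optimal-substructure Lemma~\ref{opt_struct}, and the head contributions are absorbed into the $M/\mu$ term. One minor slip: your justification $\omega<\mu\le M/\mu$ can fail (the second inequality requires $\mu^2\le M$, which need not hold); the absorption of $\varPhi_{\overline{Q}}(h(\mathcal{C}''))=\omega$ is instead justified by $\omega\le M<M/\mu$, which holds because $\mu<1$ and $M\ge\omega>0$.
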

\begin{proof}
Let $\mathcal{C}$ be a clustering encoded as $O_{P}(l,\mu)=(M,b)$. Note that $\{v,p(v)\}$ is not included in $\head{C}$ because $\mu>\omega$, and consequently this edge is cut, see Figure~\ref{merging_cut}. Furthermore, $\mathcal{C}$ is based in two clusterings, $\mathcal{C}'$ and $\mathcal{C}''$, with encodings $O_S(x,\mu)$ and $O_{\overline{Q}}(l-x+1,1)$ for some $1\leq x<l$. It is easy to see that $M=\max\left\{\omega, O_S(x,\mu)[1]\right\}$.
Notice:
$$\varPhi_P(\head{C})\geq \varPhi_S(\head{C'}),\text{\; and\; }\varPhi_P(\head{C})\geq \varPhi_{\overline{Q}}(\head{C''}).$$
Moreover,
$$\Phi_P(\headless{C})=\max\left\{\Phi_S(\headless{C'}),\Phi_{\overline{Q}}(\headless{C''})\right\}.$$
Now, focus on the evaluation of $\mathcal{C}$:
\begin{align*}
b=\Phi_P(\mathcal{C})&=\max\left\{\varPhi_P(\head{C}),\Phi_P(\headless{C})\right\}\\
&=\max\left\{\varPhi_P(\head{C}),\varPhi_S(\head{C'}),\varPhi_{\overline{Q}}(\head{C'}),\Phi_S(\headless{C'}),\Phi_{\overline{Q}}(\headless{C''})\right\}\\
&=\max\left\{\frac{M}{\mu},\Phi_S(\mathcal{C'}),\Phi_{\overline{S}}(\mathcal{C''})\right\}\\
&=\max\left\{\frac{M}{\mu},O_S(x,\mu)[2],O_{\overline{Q}}(l-x+1,1)[2]\right\}.
\end{align*}
\end{proof}
\begin{claim}\label{add_res3}
If $\mu=\omega$ then:
\begin{align*} 
o_1&=\min_{1\leq x<l}\left\{\left(M_1,\max\left\{\frac{M_1}{\omega},O_S(x,\omega)[2],O_{\overline{Q}}(l-x+1,1)[2]\right\}\right)\right\},\\
&\text{where \;}M_1=\max\left\{\omega, O_S(x,\omega)[1]\right\};\\
o_2&=\min_{1\leq x<l,\;\;\omega\leq\mu'}\left\{\left(M_2,\max\left\{\frac{M_2}{\omega},O_S(x,\mu')[2],O_{\overline{Q}}(l-x+1,\omega)[2]\right\}\right)\right\},\\
&\text{where\; }M_2=\max\left\{O_S(x,\mu')[1], O_{\overline{Q}}(l-x+1,\omega)[1]\right\};\text{\; and}\\
O_P(l,\mu)&=\min\{o_1,o_2\}.
\end{align*}
\end{claim}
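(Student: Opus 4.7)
The plan is to mimic the structure of the proofs of Claims~\ref{add_res1}--\ref{add_res2} and case-split on whether the edge $e=\{v,p(v)\}$ of weight $\omega$ is cut in the clustering $\mathcal{C}\in\cltring{H}{l,P,\omega}$ encoded by $O_P(l,\omega)$. Unlike the previous two claims, when $\mu=\omega$ both possibilities (cut and not-cut) are simultaneously feasible, which is precisely why two separate candidates $o_1$ and $o_2$ must be produced and then combined by taking the minimum.

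In both situations, let $\mathcal{C}'$ and $\mathcal{C}''$ denote the clusterings of $S$ and $\overline{Q}$ induced by $\mathcal{C}$. Since the two subtrees share the vertex $p(v)$, if $\mathcal{C}'$ has $x$ clusters then $\mathcal{C}''$ must have $l-x+1$ clusters. As in the previous claims, Lemma~\ref{opt_struct} allows us to assume without loss of generality that $\mathcal{C}'$ and $\mathcal{C}''$ are themselves encoded by entries of the already-known tables $O_S$ and $O_{\overline{Q}}$; the remaining work is then to identify the correct index arguments $\mu',\mu''$ and to express $\max(Out_P(h(\mathcal{C})))$ and $\Phi_P(\mathcal{C})$ in terms of those entries.

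In the first case ($e$ is cut, giving the candidate $o_1$) the head cluster of $\mathcal{C}$ coincides with $h(\mathcal{C}')$, and the piece of $\mathcal{C}''$ containing $p(v)$ is forced to be $\{p(v)\}$, so $\mu''=1$ and $E(h(\mathcal{C}))=E(h(\mathcal{C}'))$; matching the prescribed value $\mu=\omega$ therefore requires $\mu'=\omega$ and pins the entries to $O_S(x,\omega)$ and $O_{\overline{Q}}(l-x+1,1)$. The cut edge $e$ joins $Out_P(h(\mathcal{C}))$, producing $M_1=\max\{\omega,O_S(x,\omega)[1]\}$, and the total evaluation collapses to $\max\{M_1/\omega,\,O_S(x,\omega)[2],\,O_{\overline{Q}}(l-x+1,1)[2]\}$, because the head evaluations of $\mathcal{C}'$ and $\mathcal{C}''$ are already absorbed into the corresponding table entries.

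In the second case ($e$ is not cut, giving $o_2$) the head cluster is $h(\mathcal{C}')\cup h(\mathcal{C}'')$ with $e\in E(h(\mathcal{C}''))$, so automatically $\mu''\leq\omega$. To force $\min\{\mu',\mu''\}=\omega$ we must therefore have $\mu''=\omega$, while $\mu'$ may be any value $\geq\omega$; hence the inner minimum runs over $1\leq x<l$ and $\mu'\geq\omega$, and the relevant entries are $O_S(x,\mu')$ and $O_{\overline{Q}}(l-x+1,\omega)$. No new edge appears in $Out_P(h(\mathcal{C}))$, so $M_2=\max\{O_S(x,\mu')[1],\,O_{\overline{Q}}(l-x+1,\omega)[1]\}$, and the evaluation is $\max\{M_2/\omega,\,O_S(x,\mu')[2],\,O_{\overline{Q}}(l-x+1,\omega)[2]\}$, following exactly the same collapse as before. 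Since every valid $\mathcal{C}$ falls into exactly one of the two situations, we conclude $O_P(l,\omega)=\min\{o_1,o_2\}$. The main technical hurdle is the bookkeeping of shared cluster counts (the ``$-1$'' appearing because $p(v)$ belongs to both head clusters) and verifying that $\Phi_P(\mathcal{C})$ telescopes into the stated simple maximum of the head term and the two stored table values.
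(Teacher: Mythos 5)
Your proposal is correct and follows essentially the same route as the paper: the paper's own proof likewise splits on whether the edge $\{v,p(v)\}$ is cut (yielding $o_1$ with entries $O_S(x,\omega)$ and $O_{\overline{Q}}(l-x+1,1)$) or not cut (yielding $o_2$ with $O_S(x,\mu')$, $\mu'\geq\omega$, and $O_{\overline{Q}}(l-x+1,\omega)$), and defers the verification of each formula to the arguments of the preceding claims. Your write-up actually supplies more of the justification (the forcing of $\mu''=1$ versus $\mu''=\omega$, the shared-vertex cluster count, and the collapse of $\Phi_P(\mathcal{C})$) than the paper does.
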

\begin{proof}
Let $\mathcal{C}$ be a clustering encoded as $O_{P}(l,\mu)=(M,b)$. In this case, we have two options to build $\mathcal{C}$. The first one is using two clusterings, $\mathcal{C'}$ and $\mathcal{C''}$, with encodings $O_S(x,\omega)$ and $O_{\overline{Q}}(l-x+1,1)$, respectively. In this case, $\{v,p(v)\}$ is cut. This case is analogous to the previous claim and corresponds to $o_1$ encoding. And the second one, using two clusterings $\mathcal{C'}$ and $\mathcal{C''}$ with encodings $O_S(x,\mu')$ for some $\mu'\geq \omega$ and $O_{\overline{Q}}(l-x+1,\omega)$, respectively. In this case, $\{v,p(v)\}$ is not cut. This case corresponds to $o_2$ encoding and we can prove it using ideas similar to the ones used in the previous claims.
\end{proof}

\begin{claim}\label{add_res4}
If $\mu<\omega$ then:
\begin{align*} 
o_1&=\min_{1\leq x<l,\;\;\mu\leq\mu'}\left\{\left(M_1,\max\left\{\frac{M_1}{\mu},O_S(x,\mu)[2],O_{\overline{Q}}(l-x+1,\mu')[2]\right\}\right)\right\},\\
&\text{where\; }M_1=\max\left\{O_S(x,\mu)[1], O_{\overline{Q}}(l-x+1,\mu')[1]\right\};\\
o_2&=\min_{1\leq x<l,\;\;\mu\leq\mu'}\left\{\left(M_2,\max\left\{\frac{M_2}{\mu},O_S(x,\mu')[2],O_{\overline{Q}}(l-x+1,\mu)[2]\right\}\right)\right\},\\
&\text{where\; }M_2=\max\left\{O_S(x,\mu')[1], O_{\overline{Q}}(l-x+1,\mu)[1]\right\};\\\text{\; and}\\
O_P(l,\mu)&=\min\{o_1,o_2\}.
\end{align*}
\end{claim}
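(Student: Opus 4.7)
The approach parallels the three preceding claims, but the case $\mu<\omega$ requires a finer case split because both the \emph{cut} and \emph{not cut} statuses of the edge $\{v,p(v)\}$ are feasible, and moreover the lightest inner edge of $h(\mathcal{C})$ may sit on either the $S$-side or the $\overline{Q}$-side. The two options $o_1$ and $o_2$ in the claim are designed to capture exactly this dichotomy.

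First I would take an $l$-clustering $\mathcal{C}$ of $P$ encoded by $O_P(l,\mu)=(M,b)$ and restrict it to $S$ and to $\overline{Q}$, obtaining clusterings $\mathcal{C}'$ (with head-min $\mu_S$) and $\mathcal{C}''$ (with head-min $\mu_Q$). Because $h(\mathcal{C}')$ and $h(\mathcal{C}'')$ both contain $p(v)$, they merge into $h(\mathcal{C})$ (possibly also through the edge $\{v,p(v)\}$ itself), so that $\mu=\min\{\mu_S,\mu_Q\}$. Since $\mu<\omega<1$, either $\mu_S=\mu$ or $\mu_Q=\mu$; the former is the situation captured by $o_1$ and the latter by $o_2$. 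I would also note that the \emph{cut} scenario (i.e., $\{v,p(v)\}$ is removed by $\mathcal{C}$) forces $\mu_Q=1$ and is absorbed into $o_1$ via the choice $\mu'=1$, while the \emph{not cut} scenario forces $\mu_Q\le\omega<1$ and is handled by either option depending on which side carries the lightest head edge.

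Next I would invoke Lemma~\ref{opt_struct} twice on the edge $\{v,p(v)\}$ — once to replace $\mathcal{C}''$ by any clustering of $\overline{Q}$ encoded as $O_{\overline{Q}}(l-x+1,\mu_Q)$ (where $x$ is the number of clusters of $\mathcal{C}'$), and once to replace $\mathcal{C}'$ by any clustering of $S$ encoded as $O_S(x,\mu_S)$ — without altering the encoding $O_P(l,\mu)$. This legitimizes restricting the minimization to exactly the table entries that appear in $o_1$ and $o_2$. The $M$-formulas then follow from the identity $Out_P(h(\mathcal{C}))=Out_S(h(\mathcal{C}'))\cup Out_{\overline{Q}}(h(\mathcal{C}''))$: in the \emph{not cut} sub-case the edge $\{v,p(v)\}$ lies inside $h(\mathcal{C}'')$ and contributes to neither outgoing set, while in the \emph{cut} sub-case it is the unique outgoing edge of the singleton head $\{p(v)\}=h(\mathcal{C}'')$ of $\overline{Q}$ and therefore realizes $O_{\overline{Q}}(l-x+1,1)[1]=\omega$.

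The $b$-formulas then drop out from the decomposition $\Phi_P(\mathcal{C})=\max\{\varPhi_P(h(\mathcal{C})),\Phi_P(\headless{C})\}$: the first term is $M/\mu$ and the second is the max of the two headless evaluations on each side, each upper bounded by the corresponding $[2]$ entry. Any slack introduced by $\varPhi_S(\head{C'})$ or $\varPhi_{\overline{Q}}(\head{C''})$ being counted in those $[2]$ entries is already dominated by $M/\mu$ (since $O_S(x,\mu_S)[1],O_{\overline{Q}}(l-x+1,\mu_Q)[1]\le M$ and $\min\{\mu_S,\mu_Q\}=\mu$), so the displayed max expression equals $b$. The main obstacle I anticipate is the careful bookkeeping needed to verify that $o_1$ and $o_2$ jointly exhaust every feasible encoding — in particular the cut/not-cut dichotomy for $\{v,p(v)\}$ and the overlap case $\mu_S=\mu_Q=\mu$ — but this is a routine case analysis in the same spirit as those already carried out in Claims~\ref{S2Sp-res1}--\ref{add_res3}.
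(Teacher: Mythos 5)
Your proposal is correct and follows essentially the same route as the paper, which itself only sketches this claim by reference to the preceding ones: the two options $o_1$/$o_2$ correspond to whether the lightest inner edge of the head cluster lies on the $S$-side or the $\overline{Q}$-side, Lemma~\ref{opt_struct} justifies substituting the table entries, and the head evaluations hidden in the $[2]$ entries are dominated by $M/\mu$. The only cosmetic divergence is your treatment of the cut case: the paper dismisses it outright since $\mu<\omega$ forces $\varPhi_P(h(\mathcal{C}))\geq\omega/\mu>1$ for any such clustering, whereas you retain it as a dominated candidate inside $o_1$ via $\mu'=1$; both readings yield the same minimum.
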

\begin{proof}
Let $\mathcal{C}$ be a clustering encoded as $O_{P}(l,\mu)=(M,b)$. In this case $\mu<\omega$ and then $\{v,p(v)\}$ is not cut (otherwise, the head cluster of $\mathcal{C}$ has an evaluation greater than 1). $\mathcal{C}$ is based in clusterings $\mathcal{C'}$ and $\mathcal{C''}$ of $S$ and $\overline{Q}$, respectively. There are two possible ways to build $\mathcal{C}$: the lightest edge into $\head{C}$ is in $\head{C'}$, $o_1$; or the lightest edge into $\head{C}$ is in $\head{C''}$, $o_2$. In both cases, the formulas can be verified using the same ideas used in the previous claims.
\end{proof}

\begin{theorem}
Let $Q=T_v$ such that $T_v\neq T$, and let $S$ be a subtree rooted at $p(v)$ such that $v$ is not in $S$. Let $P$ denote the subtree formed by joining $S$ and $Q$.
If we know the functions $O_S$ and $O_Q$, then the function $O_{P}$ can be computed in $O(k^2n^2)$.
\end{theorem}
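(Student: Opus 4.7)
The plan is to treat $O_P(\cdot,\cdot)$ as a table of $O(k)$ rows and $O(n)$ columns (see Remark~\ref{O_S as table}), giving $O(kn)$ cells in total, and then to account for the work needed to fill each cell using the recurrences established in Claims~\ref{add_res1}--\ref{add_res4}. These four claims partition the cells according to the relationship between the column label $\mu$ and the weight $\omega=w(\{v,p(v)\})$, so every cell falls into exactly one case and the total cost is obtained by summing per-case contributions.

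A preliminary step is to precompute $O_{\overline{Q}}$ from $O_Q$ by invoking the previous \textbf{UpToParent} theorem, which requires $O(kn)$ time. Every recurrence in Claims~\ref{add_res1}--\ref{add_res4} consults entries of $O_{\overline{Q}}$, so this preprocessing is necessary; its cost is absorbed by the final bound.

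The main step is to tally the per-cell cost against the number of cells in each case. By Claim~\ref{add_res1}, each of the $O(k)$ cells with $\mu=1$ needs a minimisation over $x\in[1,l-1]$, hence $O(k)$ per cell and $O(k^2)$ overall. By Claim~\ref{add_res2}, each of the $O(kn)$ cells with $1>\mu>\omega$ also costs $O(k)$ per cell, contributing $O(k^2n)$. Claim~\ref{add_res3} handles the $O(k)$ cells with $\mu=\omega$: computing $o_1$ costs $O(k)$, while $o_2$ ranges jointly over $x$ and $\mu'$ at cost $O(kn)$, giving $O(k^2n)$ for this case. Finally, Claim~\ref{add_res4} governs the $O(kn)$ cells with $\mu<\omega$, where both $o_1$ and $o_2$ require a joint minimisation over $x$ and $\mu'$ and thus take $O(kn)$ per cell; this dominant case contributes $O(k^2n^2)$.

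The main driver of the bound is the double minimisation forced by Claim~\ref{add_res4}; the rest is routine accounting of cell counts against per-cell costs. Summing the four contributions yields the claimed $O(k^2n^2)$ running time. The only subtlety worth highlighting is that the auxiliary table $O_{\overline{Q}}$ must be built before any cell of $O_P$ is processed, so that all four recurrences can use its entries in constant time; no other ordering constraints are needed since each $O_P(l,\mu)$ depends only on precomputed values.
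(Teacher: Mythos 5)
Your proposal is correct and follows essentially the same approach as the paper, which likewise argues by counting the cells falling under each of Claims~\ref{add_res1}--\ref{add_res4} and multiplying by the per-cell minimisation cost; you simply carry out that accounting explicitly (correctly identifying Claim~\ref{add_res4} as the dominant $O(k^2n^2)$ case) and additionally note the $O(kn)$ precomputation of $O_{\overline{Q}}$, which the paper states just before the claims rather than inside the proof.
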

\begin{proof}
Analyzing the number of cells in each claim (\ref{add_res1}, \ref{add_res2}, \ref{add_res3}, \ref{add_res4}) and the complexity to compute the value of a cell in each case, we conclude that $O_{P}$ can be computed in $O(k^2n^2)$.
\end{proof}

\subsection{Complexity of the algorithm}
Given a tree $T$ and a value $k$, we can calculate $O_T$ by computing $O_{T_v}$ for every node $v$ in $T$ in a bottom-up (from the leaves to the root) procedure using the mentioned operations. Note that, if $v$ is a leaf, then $O_{T_v}(1,1)=(0,0)$ and $O_{T_v}(l,\mu)=\badcltr$ if $l>1$ or $\mu<1$. To compute the function $O_{T_v}$ of an inner node $v$, we proceed as follows: Let $\{v_1,\dots,v_m\}$ be the set of children of $v$. First, considering $S=T_{v_1}$, compute $O_{\overline{S}}$ from $O_S$ using the \textbf{UpToParent} operation. Subsequently, we proceed with joining the subtrees $T_{v_i}$ one by one using the \textbf{AddChildTree} operation. When all the children have been added, the resulting subtree corresponds to $T_v$. Note that we apply a single operation per edge. Consequently, this algorithm takes $O(k^2n^3)$ time. Note also that with this algorithm, we obtain the evaluation of the optimal clustering; the clusters of an optimal solution can be computed by ``\emph{navigating backwards}'' through the computed functions.

Problem~\ref{opt_problem_k} can be solved using the same idea with a slightly more complex approach. We can use a similar algorithm based on functions $O_{S}(\mu)$, saving the parameter $l$, (which corresponds to the number of clusters) and then the computing time is $O\left(n^3\right)$.

%\subparagraph*{Acknowledgments.} We thank the organizers for
%the tasty cookies.
\bibliographystyle{abbrv}
\bibliography{clustering}

\end{document}